\DeclareMathOperator{\tr}{tr}
\def\rank{\mathop{\mathrm{rank}}}
\newtheorem{corollary}{Corollary}
\newtheorem{theorem}{Theorem}
\newtheorem{proposition}{Proposition}
\newtheorem{algorithm}{Algorithm}
\newtheorem{lemma}{Lemma}
\newtheorem{remark}{Remark}
\newcommand{\bt}{\begin{theorem}}
\newcommand{\et}{\end{theorem}}
\newcommand{\bl}{\begin{lemma}}
\newcommand{\el}{\end{lemma}}
\newcommand{\bp}{\begin{proposition}}
\newcommand{\ep}{\end{proposition}}
\newcommand{\bc}{\begin{corollary}}
\newcommand{\ec}{\end{corollary}}
\newcommand{\bd}{\begin{definition}\rm}
\newcommand{\ed}{\end{definition}}
\newcommand{\bex}{\begin{example}\rm}
\newcommand{\eex}{\end{example}}
\newcommand{\br}{\begin{remark}\rm}
\newcommand{\er}{\end{remark}}
\newcommand{\btbh}{\begin{table}[!ht]}
\newcommand{\etb}{\end{table}}
\newcommand{\bfgh}{\begin{figure}[!ht]}
\newcommand{\efg}{\end{figure}}
\newcommand{\bea}{\begin{eqnarray*}}
\newcommand{\eea}{\end{eqnarray*}}
\newcommand{\be}{\begin{eqnarray}}
\newcommand{\ee}{\end{eqnarray}}
\newcommand{\ve}{\varepsilon}
\def\what{\widehat}
\newcommand{\ra}{\rightarrow}
\def\spaceR{\mathsf{R}}
\def\rank{\mathop{\mathrm{rank}}}
\newcommand{\diag}{\mathop{\mathrm{diag}}}
\def\adots{\mathinner{\mkern2mu\raise\p@\hbox{.}
\mkern2mu\raise4\p@\hbox{.}\mkern1mu
\raise7\p@\vbox{\kern7\p@\hbox{.}}\mkern1mu}}
\newcommand{\l@abcd}[2]{\hbox to\textwidth{#1\dotfill #2}}
\newcommand{\rmF}{\mathrm{F}}
\newcommand{\rmT}{\mathrm{T}}
\newcommand{\tsI}{\mathbb{I}}
\newcommand{\tsL}{\mathbb{L}}
\newcommand{\tsN}{\mathbb{N}}
\newcommand{\tsR}{\mathbb{R}}
\newcommand{\tsS}{\mathbb{S}}
\newcommand{\tsX}{\mathbb{X}}
\newcommand{\tsY}{\mathbb{Y}}
\newcommand{\tsZ}{\mathbb{Z}}
\newcommand{\bfB}{\mathbf{B}}
\newcommand{\bfC}{\mathbf{C}}
\newcommand{\bfM}{\mathbf{M}}
\newcommand{\bfO}{\mathbf{O}}
\newcommand{\bfQ}{\mathbf{Q}}
\newcommand{\bfS}{\mathbf{S}}
\newcommand{\bfU}{\mathbf{U}}
\newcommand{\bfV}{\mathbf{V}}
\newcommand{\bfX}{\mathbf{X}}
\newcommand{\bfY}{\mathbf{Y}}
\newcommand{\bfZ}{\mathbf{Z}}
\newcommand{\calA}{\mathcal{A}}
\newcommand{\calH}{\mathcal{H}}
\newcommand{\calM}{\mathcal{M}}
\newcommand{\calT}{\mathcal{T}}
\newcommand{\sfR}{\mathsf{R}}
\newcommand{\sfT}{\mathsf{T}}
\newcommand{\sfX}{\mathsf{X}}
\newcommand{\sfZ}{\mathsf{Z}}
\newcommand{\bfx}{\textbf{x}}
\newcommand{\bfy}{\textbf{y}}
\newcommand{\bfz}{\textbf{z}}
\begin{document}

\title{Iterative algorithms for weighted and unweighted finite-rank time-series approximations}

\author{{Nikita} {Zvonarev}\footnote{St.Petersburg State University}, Nina Golyandina\footnote{St.Petersburg State University}}

\date{27.04.2015}
\maketitle

\begin{abstract}
The problem of time series approximation by series of finite rank is considered from the viewpoint of signal extraction. For signal estimation,
a weighted least-squares method is applied to the trajectory matrix of the considered time series. Matrix weights are chosen to obtain equal or approximately equal weights in the equivalent problem of time-series least-squares approximation.
Several new methods are suggested and examined together with the Cadzow's iterative method. The questions of convergence, computational complexity, and accuracy are considered for the proposed methods. The methods are compared on numeric examples.
\end{abstract}


\section{Introduction}
Consider the problem of extracting a signal $\tsS~=~(s_1, \ldots, s_N)$ from an observed noisy series $\tsX = \tsS + \tsN$, where $\tsS$ is governed by a \emph{linear recurrence relation} (LRR) of order $r$:
\begin{equation*}
s_n = \sum_{i = 1}^{r} a_i s_{n-i}, \quad n = r + 1, \ldots, N;\  a_r\neq 0.
\end{equation*}
Generally, series, which are governed by LRRs, may be written in a parametric form
\begin{equation} \label{parametricform}
s_n = \sum_i P_i(n) \exp(\alpha_i n) \cos(2 \pi \omega_i n + \psi_i),
\end{equation}
where $P_i(n)$ are polynomials of $n$. However, a parametric regression approach for the problem does not lead to accurate estimation of parameters due instability of estimates.

It is known that methods based on signal subspace estimation (subspace-based methods) work well \cite{Broomhead.King1986, Vautard.etal1992, Elsner.Tsonis1996, Golyandina.etal2001}. These subspace-based methods use the following approach. Let us fix a window length $L$, $1 < L < N$, set $K = N - L + 1$, and build the \emph{trajectory matrix} for the series $\tsS$:
\begin{equation*}
\bfS = \begin{pmatrix}
s_1 & s_2 & \ldots & s_K \\
s_2 & s_3 & \ldots & s_{K + 1} \\
\vdots & \vdots & \vdots & \vdots \\
s_L & s_{L + 1} & \ldots & s_N
\end{pmatrix}.
\end{equation*}
Note that $\bfS\in \calH$, where $\calH$ is the set of Hankel matrices with equal values on their anti-diagonals $i+j=\mathrm{const}$.
 Let $\tsS$ be governed by an LRR of order $r$, $r < \min(L, K)$, and be not governed by an LRR of smaller order. Then $\rank \bfS = r$ and therefore $\bfS$ is a Hankel matrix of low-rank $r$. The column space of $\bfS$, that is, the  signal subspace, provides estimates of $\alpha_i$ and $\omega_i$ in \eqref{parametricform} by the ESPRIT method \cite{Roy.Kailath1989, Golyandina.Zhigljavsky2012} applied to $\bfS$.

Let $\bfX$ be the trajectory matrix of the series $\tsX$. Then the problem of estimation of $\tsS$ and the signal subspace can be considered as a problem of approximation of the matrix $\bfX$ by a Hankel matrix of rank not larger than $r$:
\begin{equation}\label{introd_task}
\|\bfX - \bfY\|^2_\rmF \to \min_{\substack{\rank \bfY \le r \\ \bfY \in \calH}},
\end{equation}
where $\|\cdot\|_\rmF$ is the Frobenius norm.

Many papers are devoted to this problem, e.g., \cite{Cadzow1988, Markovsky2011, Usevich.Markovsky2014, Gillard.Zhigljavsky2013} among others, where the problem is called Structured Low-Rank Approximation. Numerical solutions of the problem are iterative; e.g., the Cadzow iterative method \cite{Cadzow1988} consists of alternating projections to the sets of Hankel matrices and of matrices of rank not larger than $r$. The target function is not unimodal in such class of problems, and convergence to the global minimum is not guaranteed; despite this, the problem \eqref{introd_task} is considered to be well-researched, though it still has many open questions.

Note that the problem \eqref{introd_task} is equivalent to the problem of weighted approximation of the series $\tsX = (x_1, \ldots, x_N)$:
\begin{equation}\label{introd_task_2}
\sum_{i = 1}^N w_i(x_i - y_i)^2 \to \min_{\substack{\tsY: \rank \bfY \le r \\ \bfY \in \calH}},
\end{equation}
where
\begin{equation}
\label{eq:w}
w_i = \begin{cases}
i & \text{for $i = 1, \ldots, L-1,$}\\
L & \text{for $i = L, \ldots, K,$}\\
N - i + 1 & \text{for $i = K + 1, \ldots, N$}
\end{cases},
\end{equation}
and $\bfY$ is the  trajectory matrix of the series $\tsY$.

The weights \eqref{eq:w} at both ends of the series are smaller than that in the center, i.e. the ordinary least-square problem \eqref{introd_task} for matrices corresponds to a weighted least-squares problem for series.

The aim of this paper is to consider methods which solve the problem \eqref{introd_task_2} with equal weights instead of $w_i$ and then to compare the constructed methods in terms of accuracy of the signal estimation. All described methods are iterative.
If one is interested in a signal estimate, which is not necessarily governed by an LRR, then the first iteration can be taken as a low-cost estimate of the signal.
Hence, the described methods are compared by accuracy of the signal estimation at the first iteration and in the limit. Note that Singular Spectrum Analysis (SSA) \cite{Broomhead.King1986, Vautard.etal1992, Elsner.Tsonis1996, Golyandina.etal2001, Ghil.etal2002, Golyandina.Zhigljavsky2012} applied to the problem of signal estimation can be represented as the first iteration of the Cadzow method.

The structure of the paper is as follows.  In Section~\ref{sec:lowrank_appr}, the problem of approximating a matrix by a Hankel rank-deficient matrix is considered. The common structure of iterative alternating-projection algorithms is described, approaches to construction of the projectors are given, the convergence theorem is proved.

In Section~\ref{sec:ts_matrices}, the relation between the problems of approximation of time series and of their trajectory matrices is described. The relationship between weights in equivalent weighted least-squares problems is also given. Section~\ref{sec:alg} contains the suggested time-series approximation algorithms. In Section~\ref{sec:simul}, a numeric comparison of algorithms on a typical simulated example is performed. Section~\ref{sec:ex_real} contains an example with analysis of real-life data.

The paper is summarized and conclusions are drawn in Section~\ref{sec:concl}. Supplementary results on SSA separability, which has a connection with the convergence rate, are proved in Appendix~\ref{sec:app}.

\section{Approximation by rank-deficient Hankel matrices}
\label{sec:lowrank_appr}
\subsection{Common scheme of iterations}
Consider the problem of projecting a point $\bfx$ to a set $\calH~ \cap~\calM$ in a Hilbert space $\sfX$ with a inner product $\langle \cdot, \cdot \rangle$, where $\calH$ and $\calM$ are closed under the limit operation, $\calH$ is linear subspace, while $\calM$ is closed with respect to scalar multiplication, i.e.
if $\bfz \in \calM$, then $\alpha \bfz\in \calM$ for any $\alpha$. Note that  $\calM$ is not necessarily a linear space or a convex set.

Thus, the problem is formulated as
\be
\label{eq:gen_task}
\|\bfx - \bfy\| \to \min_\bfy \mbox{\ over\ } \bfy \in \calH \cap \calM,
\ee
where $\|\cdot\|$ is the norm corresponding to the inner product.

To present the algorithm's scheme for the solution of this problem, let us introduce the projectors to the subset $\calM$ and subspace $\calH$ with respect to the norm $\|\cdot\|$: $\Pi_{\calM}$ is the projector to $\calM$,
$\Pi_{\calH}$ is the projector to $\calH$.
Note that if the projection to $\calM$ is not uniquely defined, then we suppose that in the case of ambiguity any closest point is chosen.
The projector to $\calH$ is evidently orthogonal, while $\Pi_{\calM}$ is orthogonal due to the following proposition.

\begin{proposition} \label{prop:pythaprop}
	Let $\sfX$ be a Hilbert space, $\calM \subset \sfX$ be a subset closed with respect to scalar multiplication, $\Pi_\calM$ be the projection operator to $\calM$. Then for any $\bfx \in \sfX$ the following equation (``Pythagorean equality'') is true: $\|\bfx\|^2~=~\|\bfx~-~\Pi_\calM \bfx\|^2~+~\|\Pi_\calM \bfx\|^2$.
\end{proposition}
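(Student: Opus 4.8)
The plan is to exploit the fact that $\calM$ is closed under scalar multiplication, so the ray $\{t\, \Pi_\calM \bfx : t \in \spaceR\}$ lies entirely in $\calM$, and hence $\Pi_\calM \bfx$ — being a closest point of $\calM$ to $\bfx$ — must in particular be at least as close to $\bfx$ as every point of this ray. First I would set $\bfp = \Pi_\calM \bfx$ and consider the scalar function $f(t) = \|\bfx - t\bfp\|^2 = \|\bfx\|^2 - 2t\,\langle \bfx, \bfp\rangle + t^2 \|\bfp\|^2$. Since $t\bfp \in \calM$ for all $t$ and $\bfp$ is a closest point of $\calM$ to $\bfx$, the value $t = 1$ must minimize $f$ over $t \in \spaceR$ (if some $t_0$ gave a smaller value, $t_0\bfp$ would be a strictly closer point of $\calM$, contradicting the defining property of $\Pi_\calM$).

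Next I would extract the consequence of this minimization. If $\bfp = 0$ the claimed identity reads $\|\bfx\|^2 = \|\bfx\|^2$, trivially true, so assume $\bfp \neq 0$; then $f$ is a genuine upward parabola in $t$ and its unique minimizer is $t^* = \langle \bfx, \bfp\rangle / \|\bfp\|^2$. Equating $t^* = 1$ gives $\langle \bfx, \bfp \rangle = \|\bfp\|^2$, i.e. $\langle \bfx - \bfp, \bfp\rangle = 0$: the residual is orthogonal to the projection. From here the Pythagorean identity is immediate by writing $\bfx = (\bfx - \bfp) + \bfp$ and expanding:
\[
\|\bfx\|^2 = \|\bfx - \bfp\|^2 + 2\langle \bfx - \bfp, \bfp\rangle + \|\bfp\|^2 = \|\bfx - \bfp\|^2 + \|\bfp\|^2,
\]
which is exactly the assertion with $\bfp = \Pi_\calM \bfx$.

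One point that needs a word of care — and the only place where the argument is not completely mechanical — is the existence of a closest point: I am using that $\Pi_\calM \bfx$ is well-defined, i.e. that the infimum $\inf_{\bfy \in \calM}\|\bfx - \bfy\|$ is attained. The paper has posited this (the projector $\Pi_\calM$ is introduced as an operator, with the convention that in case of non-uniqueness any closest point is taken), so I would simply invoke that standing assumption; closedness of $\calM$ under limits is what makes it reasonable. Beyond that, the whole proof is a one-variable calculus observation dressed in Hilbert-space notation, so there is no real obstacle — the main thing is to state clearly why $t=1$ is optimal for $f$ and to dispatch the degenerate case $\Pi_\calM \bfx = 0$ separately.
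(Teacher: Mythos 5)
Your proof is correct and takes essentially the same route as the paper's: both arguments exploit closure of $\calM$ under scalar multiplication to compare $\bfp = \Pi_\calM\bfx$ with the optimal scalar multiple $\gamma\bfp$, $\gamma = \langle\bfx,\bfp\rangle/\|\bfp\|^2$, and conclude that the residual is orthogonal to the projection. The only cosmetic differences are that you minimize the quadratic $f(t)=\|\bfx-t\bfp\|^2$ directly while the paper argues by contradiction, and that you explicitly dispatch the degenerate case $\Pi_\calM\bfx=0$, which the paper leaves implicit.
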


\begin{proof}
	Define $\bfy = \Pi_\calM \bfx$. Since
	\begin{gather*}
	\|\bfx\|^2 = \|\bfx - \bfy \|^2 + \|\bfy \|^2 + 2 \langle \bfx - \bfy, \bfy \rangle,
	\end{gather*}
    we should prove that $\langle \bfx - \bfy, \bfy \rangle = 0$.
	Assume the opposite: $\langle \bfx - \bfy, \bfy \rangle \ne 0$. Then for
	\begin{equation*}
	\gamma = \frac{\langle \bfx, \bfy \rangle}{\langle \bfy, \bfy \rangle}
	\end{equation*}
    $\langle \bfx - \gamma \bfy, \gamma \bfy \rangle = 0$
    and  therefore $\|\bfx - \bfy\|^2 > \|\bfx - \gamma \bfy\|^2$:
	\begin{gather*}
	\|\bfx - \bfy\|^2 - \|\bfx - \gamma \bfy\|^2 = \\\langle \bfy, \bfy \rangle - 2 \langle \bfx, \bfy \rangle + \frac{\langle \bfx, \bfy \rangle ^ 2}{\langle \bfy, \bfy \rangle} =
    \frac{\langle \bfx - \bfy, \bfy \rangle^2}{\langle \bfy, \bfy \rangle} > 0.
	\end{gather*}
	Since 	$\gamma \bfy$ lies in $\calM$ according to the property of $\calM$,
    the contradiction with the fact that $\bfy = \Pi_\calM \bfx$ is the closest point to $\bfx$ is acquired.
\end{proof}

\begin{remark}
\label{rem:adj}
The proof of Proposition~\ref{prop:pythaprop} yields that for any $\bfy\in \calM$ one can perform an adjustment $\calA(\bfy)=\frac{\langle \bfx, \bfy \rangle}{\langle \bfy, \bfy \rangle} \bfy \in \calM$ such that $\calA(\bfy)$
is not further from $\bfx$ than the original $\bfy$. Moreover, $\calA(\bfy)$ is
orthogonal to $\bfx - \calA(\bfy)$.
\end{remark}

Let us consider the \emph{iterative method of alternating projections} for the problem \eqref{eq:gen_task},
which is given by the following iteration step:
\be
\label{eq:iter}
\bfy_{k+1}=\Pi_\calH \Pi_{\calM} \bfy_{k}, \mbox{\ where\ } \bfy_{0}=\bfx.
\ee

In the following theorem, we investigate convergence of the sequence~\eqref{eq:iter}.

\begin{theorem}
	\label{th:converg}
		Let the conditions of Proposition~\ref{prop:pythaprop} be fulfilled and also the set $\calM$ and the space $\calH$ be closed under the limit operation. Then
	\begin{enumerate}
		\item $\|\bfy_k - \Pi_{\calM}\bfy_k\| \to 0$ as $k \to +\infty$, $\|\Pi_{\calM}\bfy_k - \bfy_{k+1}\| \to 0$ as $k \to +\infty$.
		\item Let $\calM \cap B_1$ be a compact set, where $B_1=\{\bfz: \|\bfz\|~\le~1\}$ is the closed unit ball. Then there exists a convergent subsequence of points $\bfy_{i_1}, \bfy_{i_2}, \ldots$ such that its limit $\bfy^*$ belongs to $\calM \cap \calH$.
	\end{enumerate}
\end{theorem}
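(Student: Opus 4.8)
The plan is to derive both parts from a single monotone quantity, the norm $\|\bfy_k\|$. First I would record the two orthogonality facts already at hand: Proposition~\ref{prop:pythaprop} gives the Pythagorean identity $\|\bfy_k\|^2 = \|\bfy_k - \Pi_\calM \bfy_k\|^2 + \|\Pi_\calM \bfy_k\|^2$ for the projection onto $\calM$, and since $\calH$ is a linear subspace the projection $\Pi_\calH$ is orthogonal in the usual sense, so $\|\bfz\|^2 = \|\bfz - \Pi_\calH \bfz\|^2 + \|\Pi_\calH \bfz\|^2$ for every $\bfz \in \sfX$. Applying the latter with $\bfz = \Pi_\calM \bfy_k$, recalling $\bfy_{k+1} = \Pi_\calH \Pi_\calM \bfy_k$, and adding the two identities produces the telescoping relation
\[
\|\bfy_k\|^2 - \|\bfy_{k+1}\|^2 = \|\bfy_k - \Pi_\calM \bfy_k\|^2 + \|\Pi_\calM \bfy_k - \bfy_{k+1}\|^2 \ge 0 .
\]

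Part~1 is then immediate: $\|\bfy_k\|^2$ is non-increasing and bounded below by $0$, hence convergent, so the right-hand side above tends to $0$ and therefore so do both of its non-negative summands; in fact summing over $k$ gives $\sum_{k\ge 0}\left(\|\bfy_k - \Pi_\calM \bfy_k\|^2 + \|\Pi_\calM \bfy_k - \bfy_{k+1}\|^2\right) \le \|\bfx\|^2$, which is slightly stronger.

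For Part~2, the same monotonicity yields $\|\Pi_\calM \bfy_k\| \le \|\bfy_k\| \le \|\bfy_0\| = \|\bfx\|$, so every point $\Pi_\calM \bfy_k$ lies in $\calM \cap B_{\|\bfx\|}$. Since $\calM$ is closed under scalar multiplication, $\calM \cap B_{\|\bfx\|} = \|\bfx\|\cdot(\calM \cap B_1)$, which is compact by hypothesis; hence some subsequence satisfies $\Pi_\calM \bfy_{i_j} \to \bfy^* \in \calM$. Part~1 now transfers this convergence: from $\|\bfy_{i_j} - \Pi_\calM \bfy_{i_j}\| \to 0$ we get $\bfy_{i_j} \to \bfy^*$, so $\bfy^*$ is indeed the limit of a subsequence of the $\bfy_k$; and from $\|\Pi_\calM \bfy_{i_j} - \bfy_{i_j+1}\| \to 0$ we get $\bfy_{i_j+1} \to \bfy^*$, where each $\bfy_{i_j+1} = \Pi_\calH \Pi_\calM \bfy_{i_j} \in \calH$, so $\bfy^* \in \calH$ because $\calH$ is closed under limits. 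Thus $\bfy^* \in \calM \cap \calH$.

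The only delicate point — the ``main obstacle'', modest as it is — is the compactness step in Part~2: boundedness of $\{\bfy_k\}$ does not by itself give a convergent subsequence in an infinite-dimensional Hilbert space, so one must first move the iterates into $\calM$, where the compactness hypothesis applies, and only then pull the convergence back to $\bfy_k$ and $\bfy_{k+1}$ via Part~1. One should also verify the elementary observation that closedness of $\calM$ under scalar multiplication makes $\calM \cap B_R$ a rescaled copy of $\calM \cap B_1$, hence compact. Everything else is bookkeeping with the two Pythagorean identities.
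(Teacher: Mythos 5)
Your proof is correct, and although it rests on the same two Pythagorean identities as the paper's argument, it organizes them into a genuinely more direct route. For Part~1 the paper first establishes the chain $\|\bfy_k - \Pi_{\calM}\bfy_k\| \ge \|\Pi_{\calM}\bfy_k - \bfy_{k+1}\| \ge \|\bfy_{k+1} - \Pi_{\calM}\bfy_{k+1}\|$, concludes that both distance sequences are non-increasing with a common limit $c$, and then rules out $c>0$ by contradiction --- using precisely your telescoping identity to drive $\|\bfy_{k+j}\|^2$ below zero. You bypass both steps: monotone convergence of the bounded sequence $\|\bfy_k\|^2$ forces the increments, hence each of the two non-negative summands, to zero, and you get the summability bound $\sum_{k}\bigl(\|\bfy_k-\Pi_{\calM}\bfy_k\|^2+\|\Pi_{\calM}\bfy_k-\bfy_{k+1}\|^2\bigr)\le\|\bfx\|^2$ for free, which is indeed slightly stronger than the stated claim. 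What the paper's longer route buys in exchange is the monotonicity of the two distance sequences themselves (a fact of some independent interest that your argument does not establish, but which the theorem does not assert). In Part~2 your compactness step (rescaling $\calM\cap B_1$ to cover the bounded sequence $\Pi_{\calM}\bfy_k$) matches the paper's; the one difference is at the end, where you deduce $\bfy^*\in\calH$ from $\bfy_{i_j+1}\in\calH$, $\bfy_{i_j+1}\to\bfy^*$ and closedness of $\calH$, whereas the paper invokes continuity of the projector $\Pi_{\calH}$. Your version needs one fewer fact and is, if anything, cleaner; both are valid.
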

\begin{proof}
	Let us use the following inequalities:
	\begin{multline}
	\label{eq:chuprop}
	\|\bfy_k - \Pi_{\calM} \bfy_k\| \ge \|\Pi_{\calM} \bfy_k - \bfy_{k + 1}\| \ge \\ \|\bfy_{k+1} - \Pi_{\calM} \bfy_{k + 1}\|.
	\end{multline}
Indeed, since the projection $\Pi_\calM \bfz$ is not further from $\bfz$ than any other point from $\calM$ and the similar statement is valid for
$\Pi_\calH$, we have $\|\Pi_{\calM} \bfy_k - \bfz\| \ge \|\bfz - \Pi_{\calM} \bfz\|$, where $\bfz=\bfy_{k+1}$, and $\|\bfy_k - \bfz\| \ge \|\bfz - \Pi_{\calH} \bfz\|$, where $\bfz=\Pi_{\calM} \bfy_k$.
	\begin{enumerate}
		\item According to inequalities \eqref{eq:chuprop}, the sequences $\|\bfy_k~-~\Pi_{\calM} \bfy_k\|$, $k = 1, 2, \ldots$, and $\|\Pi_{\calM} \bfy_k - \bfy_{k + 1}\|$, $k = 1, 2, \ldots$, are non-increasing. It is obvious that they are limited below by zero. Therefore, they have the same limit $c$ due to \eqref{eq:chuprop}.
		
		Let us prove that $c = 0$ assuming the opposite $c > 0$. Then there exists $d > 0$ such that $\|\bfy_k - \Pi_{\calM} \bfy_k\| > d$ and $\|\Pi_{\calM} \bfy_k - \bfy_{k + 1}\| > d$ for any $k = 1, 2, \ldots$. In accordance to Proposition~\ref{prop:pythaprop}, the following equality is valid: $\|\bfy_k \|^2~=~\|\bfy_k~-~\Pi_{\calM} \bfy_k\|^2~+~\|\Pi_{\calM} \bfy_k \|^2$. Since the space $\calH$ is linear, the following equality is valid too:
		$\|\Pi_{\calM} \bfy_k \|^2~=\|\Pi_{\calM} \bfy_k~-~\Pi_\calH \Pi_{\calM} \bfy_k\|^2~+~\|\Pi_\calH \Pi_{\calM} \bfy_k \|^2 = \|\Pi_{\calM} \bfy_k~-~\bfy_{k+1}\|^2~+~\|\bfy_{k+1} \|^2$. Therefore,
		\begin{multline*}
		\|\bfy_k\|^2 = \|\Pi_{\calM} \bfy_k\|^2 + \|\bfy_k - \Pi_{\calM} \bfy_k\|^2 =\\ \|\bfy_k - \Pi_{\calM} \bfy_k\|^2 + \|\Pi_{\calM} \bfy_k - \bfy_{k + 1}\|^2 + \|\bfy_{k + 1}\|^2.
		\end{multline*}
		Thus, $\|\bfy_{k+1}\|^2 < \|\bfy_k\|^2 - 2d^2$. Expanding this inequality by the same way, we obtain that $\|\bfy_{k+j}\|^2 < \|\bfy_k\|^2 - 2 j d^2$ for any $j = 1, 2, \ldots$. Choose $k = 1$, and $j = \lceil \|\bfy_k\|^2 / (2d^2) \rceil + 1$. Then $\|\bfy_{k+j}\|^2 < 0$, which is impossible. Thus, $c=0$.
		\item Consider the sequence $\Pi_{\calM} \bfy_k$, $k = 1, 2, \ldots$, which is bounded, since $\|\Pi_{\calM} \bfz\| \le \|\bfz\|$ (by Proposition \ref{prop:pythaprop}) and $\|\Pi_{\calH} \bfz\| \le \|\bfz\|$ for any $\bfz \in \sfX$. The sequence belongs to a compact set, since $\calM$ is closed with respect to scalar multiplication, and we can resize the unit ball to cover the sequence. Then a convergent subsequence $(\Pi_{\calM} \bfy_{i_k})$ can be chosen; denote by $\bfy^*\in\calM$ its limit and notice that $\|\Pi_{\calM} \bfy_{i_k} - \bfy_{i_k + 1}\| = \|\Pi_{\calM} \bfy_{i_k} - \Pi_\calH \Pi_{\calM} \bfy_{i_k}\| \to 0$ as $k \to + \infty$. Since  $\calH$ is closed, and $\sfX$ is a Banach space, the projector $\Pi_\calH$ is a continuous mapping. Taking into consideration that $\|\bfz - \Pi_\calH \bfz\|$ is a composition of continuous mappings, we obtain that $\|\bfy^* - \Pi_\calH \bfy^*\| = 0$, $\bfy^* \in \calM \cap \calH$. Finally, $\Pi_\calH$ is a continuous mapping and therefore the sequence $(\Pi_\calH \Pi_{\calM} \bfy_{i_k})$ converges to $\bfy^*$. Thus, $\bfy_{i_k + 1}$ is the required subsequence.
	\end{enumerate}
\end{proof}

Actually, Proposition \ref{prop:pythaprop} was in fact proved in \cite{Gillard.Zhigljavsky2013} for a particular case, while inequalities \eqref{eq:chuprop} are extensions of \cite[inequalities (4.1)]{Chu.etal2003}.

\medskip
Let us apply Theorem~\ref{th:converg} to the case of matrix approximation by rank-deficient Hankel matrices. Let $\sfX = \spaceR^{L\times K}$, i.e. $\sfX$ be the space of matrices of size $L \times K$ equipped with some inner product, $\calH \subset \spaceR^{L\times K}$ be the space of Hankel matrices, $\calM = \calM_r\subset \spaceR^{L\times K}$ be the set of matrices of rank not larger than $r$. Then the iterative step \ref{eq:iter} for method of alternating projections has the following form:
\begin{equation*}
\bfY_{k+1}=\Pi_\calH \Pi_{\calM_r} \bfY_{k}, \mbox{\ where\ } \bfY_{0}=\bfX \in \spaceR^{L\times K}.
\end{equation*}

It is well known that the set $\calM_r$ is closed with respect to the conventional Frobenius norm and therefore is closed
to any norm, since in the matrix space all the norm are equivalent.
The closed unit ball is obviously a compact set in finite-dimensional Euclidean space.
Therefore the conclusion of Theorem~\ref{th:converg} holds.
Note that the existence of a convergent subsequence can be deduced from \cite{Cadzow1988}.
However, our proof of this fact is based on different assumptions; in particular, we stress on the Pythagorean equality for projections to sets which are closed with respect to multiplication.

In this paper, we consider norms (semi-norms) in $\sfX$ generated by weighted Frobenius inner products in the form, which is parameterized by a matrix $\bfM$ with positive (non-negative) entries $m_{i,j}$:
\be
\label{eq:w_inner_prod}
\langle\bfY, \bfZ\rangle_\bfM = \sum_{l = 1}^L \sum_{k = 1}^K m_{l, k} y_{l, k} z_{l, k}.
\ee
Therefore, the conclusion of Theorem~\ref{th:converg} holds if the weights $m_{i,j}$ are positive.

\subsection{Evaluation of projections}
Let us consider the weighted norm $\|\cdot\|_\bfM$ generated by \eqref{eq:w_inner_prod}, that is, $\|\bfX\|^2 = \|\bfX\|^2_\bfM = \sum_{l = 1}^L \sum_{k = 1}^K m_{l, k} x^2_{l, k}$.

\subsubsection{Projector $\Pi_\calH$.}
\label{sec:projH}
It is easy to show that $\Pi_\calH$
can be evaluated explicitly using the following proposition.

\begin{proposition}
	For $\widehat{\bfY}=\Pi_\calH \bfY$ we have
	\begin{equation*}
	\hat{y}_{ij} = \frac{\sum_{l,k:\, l+k=i+j} m_{l,k} y_{l,k}}{\sum_{l,k:\, l+k=i+j} m_{l,k}}.
	\end{equation*}
\end{proposition}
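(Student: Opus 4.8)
The plan is to invoke the variational characterization of orthogonal projection onto a linear subspace. Here $\calH$ is a linear subspace of the finite-dimensional space $\sfX=\spaceR^{L\times K}$ and hence closed, so with respect to the inner product $\langle\cdot,\cdot\rangle_\bfM$ the projection $\Pi_\calH$ is well defined and $\widehat{\bfY}=\Pi_\calH\bfY$ is the unique element of $\calH$ satisfying $\langle\bfY-\widehat{\bfY},\bfZ\rangle_\bfM=0$ for every $\bfZ\in\calH$. Thus it suffices to check that the matrix $\widehat{\bfY}$ with entries given by the displayed formula (i) lies in $\calH$ and (ii) is $\bfM$-orthogonal to every matrix in $\calH$.

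The first point is immediate: the stated value $\hat y_{ij}$ depends on $(i,j)$ only through $i+j$, so $\widehat{\bfY}$ is constant on each anti-diagonal and therefore belongs to $\calH$. This uses that the denominator $\sum_{l+k=i+j}m_{l,k}$ is nonzero, which is guaranteed by positivity of the weights $m_{l,k}$ (in the semi-norm case, by the presence of at least one positive weight on each anti-diagonal).

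For the orthogonality check I would take an arbitrary $\bfZ\in\calH$, denote by $\zeta_d$ its common value on the anti-diagonal $l+k=d$, and regroup the inner product by anti-diagonals $d=2,\ldots,L+K$:
\[
\langle\bfY-\widehat{\bfY},\bfZ\rangle_\bfM
=\sum_{l,k}m_{l,k}(y_{l,k}-\hat y_{l,k})z_{l,k}
=\sum_{d}\zeta_d\Bigl(\sum_{l+k=d}m_{l,k}y_{l,k}-\hat y_{(d)}\!\!\sum_{l+k=d}m_{l,k}\Bigr),
\]
where $\hat y_{(d)}$ is the value of $\widehat{\bfY}$ on anti-diagonal $d$. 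By the very definition of $\hat y_{(d)}$ as the $\bfM$-weighted mean of the $y_{l,k}$ along that anti-diagonal, every bracket equals zero, so the sum vanishes for all $\bfZ\in\calH$; this identifies $\widehat{\bfY}$ with $\Pi_\calH\bfY$.

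Conceptually the statement is just the observation that both $\calH$ and the quadratic form $\|\bfY-\cdot\|_\bfM^2$ split as direct sums over the anti-diagonals, so minimizing over $\calH$ decouples into the elementary one-dimensional problems $\min_{c}\sum_{l+k=d}m_{l,k}(y_{l,k}-c)^2$, each solved by the weighted average. I do not anticipate any genuine obstacle; the only point requiring a word of care is the positivity of the per-anti-diagonal weight sums, which is what makes the formula well defined and the scalar minimizers unique.
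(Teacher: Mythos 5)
Your proof is correct and is exactly the argument the paper has in mind: the paper states this proposition without proof (prefacing it only with ``it is easy to show''), and the intended justification is precisely the per-anti-diagonal decoupling of the weighted least-squares problem, equivalently the membership-plus-orthogonality characterization of $\Pi_\calH$ that you verify. Your added caveat that each anti-diagonal must carry at least one positive weight $m_{l,k}$ for the formula to be well defined (and the minimizer unique) is a worthwhile precision that the paper leaves implicit.
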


It is impossible to derive an explicit form of $\Pi_{\calM_r}$ in the case of arbitrary weights.
Consider one specific case and suggest an iterative approach to the general case.

\subsubsection{Case of the explicit form of the projector $\Pi_{\calM_r}$.}
\label{sec:projMr}
\label{sec:obliqueSVD}
For equal weights $m_{ij}=1$, denote $\Pi_r=\Pi_{\calM_r}$.
It is well-known that the projector $\Pi_{r} \bfY$ can be evaluated as the sum of $r$ leading components of the singular value decomposition (SVD) of the matrix $\bfY$. More precisely, let $L\le K$ for simplicity and $\bfY = \bfU \mathbf{\Sigma} \bfV^\rmT$ be the SVD, where $\bfU$ is an orthogonal matrix of size $L \times L$, $\mathbf{\Sigma}$ is a quasi-diagonal matrix of size $L \times K$ with non-negative diagonal elements $(\sigma_1, \ldots, \sigma_L)$ in non-increasing order, and $\bfV$ is an orthogonal matrix of size $K \times K$. Denote by $\mathbf{\Sigma}_r = (\sigma^r_{l k})$ the following matrix:
\begin{equation*}
\sigma^r_{i j} = \begin{cases}
\sigma_i & \text{if $i = j, i \le r,$}\\
0 & \text{otherwise}.
\end{cases}
\end{equation*}
Then the projection can be evaluated as $\Pi_{r} \bfY  = \bfU \mathbf{\Sigma}_r \bfV^\rmT$.
The next proposition describes the case when evaluation of a projector $\Pi_{\calM_r}$ is reduced to application of the projector $\Pi_r$.

\begin{proposition}
	\label{prop:projS}
	Let there exist a symmetric positive semidefinite matrix $\bfC$ of size $K \times K$ such that for a given $\bfM$ the  equality $\|\bfZ\|_\bfM^2 = \tr(\bfZ \bfC \bfZ^\rmT)$ holds for any matrix $\bfZ\in \spaceR^{L\times K}$.
	Suppose that the column space of a matrix $\bfY$ lies in the column space of the matrix $\bfC$.
	Then
	\be
	\label{eq:PiMr}
	\Pi_{\calM_r} \bfY = (\Pi_r \bfB) (\bfO_\bfC^{\rmT})^\dagger,
	\ee
	where $\bfO_\bfC$ is a matrix such that $\bfC = \bfO_\bfC^{\rmT}\bfO_\bfC$,
	$\bfB = \bfY \bfO_\bfC^{\rmT}$, $(\bfO_\bfC^{\rmT})^\dagger$ denotes  Moore-Penrose pseudoinverse to the matrix $\bfO_\bfC^{\rmT}$.
\end{proposition}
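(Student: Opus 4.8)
The plan is to reduce the weighted rank-constrained projection to an ordinary (unweighted) one by a change of variables that turns the weighted norm into the Frobenius norm on the relevant subspace. First I would observe that, since $\bfC = \bfO_\bfC^\rmT \bfO_\bfC$, for any $\bfZ \in \spaceR^{L\times K}$ we have $\|\bfZ\|_\bfM^2 = \tr(\bfZ \bfC \bfZ^\rmT) = \tr(\bfZ \bfO_\bfC^\rmT \bfO_\bfC \bfZ^\rmT) = \|\bfZ \bfO_\bfC^\rmT\|_\rmF^2$. In particular, writing $\bfB = \bfY \bfO_\bfC^\rmT$, the objective $\|\bfY - \bfZ\|_\bfM^2$ for a candidate $\bfZ$ equals $\|(\bfY - \bfZ)\bfO_\bfC^\rmT\|_\rmF^2 = \|\bfB - \bfZ\bfO_\bfC^\rmT\|_\rmF^2$. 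So minimizing the weighted distance from $\bfY$ over $\bfZ \in \calM_r$ is the same as minimizing $\|\bfB - \bfW\|_\rmF^2$ over $\bfW$ ranging in the image of $\calM_r$ under right-multiplication by $\bfO_\bfC^\rmT$.

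The second step is to understand that image. Right-multiplication by $\bfO_\bfC^\rmT$ cannot increase rank, so $\bfZ \in \calM_r$ forces $\bfW = \bfZ\bfO_\bfC^\rmT \in \calM_r$. Conversely — and this is where the column-space hypothesis enters — I would show that every $\bfW \in \calM_r$ whose column space lies in $\colspace(\bfC) = \colspace(\bfO_\bfC^\rmT)$ arises as $\bfZ\bfO_\bfC^\rmT$ for some $\bfZ \in \calM_r$, namely $\bfZ = \bfW(\bfO_\bfC^\rmT)^\dagger$; one checks $\bfZ\bfO_\bfC^\rmT = \bfW(\bfO_\bfC^\rmT)^\dagger\bfO_\bfC^\rmT = \bfW$ precisely because the rows of $\bfW$ (equivalently, columns of $\bfW^\rmT$... more carefully: the constraint $\colspace(\bfW)\subseteq\colspace(\bfO_\bfC^\rmT)$ means $(\bfO_\bfC^\rmT)^\dagger\bfO_\bfC^\rmT$ acts as the identity when applied appropriately) — this is the routine linear-algebra identity $\bfW P = \bfW$ where $P = (\bfO_\bfC^\rmT)^\dagger\bfO_\bfC^\rmT$ is the orthogonal projector onto $\rowspace(\bfO_\bfC^\rmT)$. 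Since $\bfY$'s column space lies in $\colspace(\bfC)$, the optimal $\bfW^\ast = \Pi_r \bfB$ inherits the same column-space containment (its column space is contained in that of $\bfB = \bfY\bfO_\bfC^\rmT$, hence in that of $\bfY$, hence in $\colspace(\bfC)$), so the unconstrained-in-column-space minimizer $\Pi_r\bfB$ is in fact attainable by a legitimate $\bfZ \in \calM_r$. Therefore the minimum over $\calM_r$ of $\|\bfB - \bfZ\bfO_\bfC^\rmT\|_\rmF$ equals the minimum over all rank-$\le r$ matrices of $\|\bfB - \bfW\|_\rmF$, attained at $\bfW^\ast = \Pi_r\bfB$, and pulling back gives $\Pi_{\calM_r}\bfY = \bfZ^\ast = (\Pi_r\bfB)(\bfO_\bfC^\rmT)^\dagger$, which is \eqref{eq:PiMr}.

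The main obstacle is the column-space bookkeeping: one must verify carefully that (i) the map $\bfZ \mapsto \bfZ\bfO_\bfC^\rmT$ restricted to $\{\bfZ \in \calM_r : \colspace(\bfZ)\subseteq\colspace(\bfC)\}$ is a bijection onto $\{\bfW\in\calM_r : \colspace(\bfW)\subseteq\colspace(\bfC)\}$ with inverse $\bfW\mapsto\bfW(\bfO_\bfC^\rmT)^\dagger$, and (ii) the global Frobenius minimizer $\Pi_r\bfB$ automatically satisfies the column-space constraint, so that restricting the feasible set does not change the optimal value. Point (i) rests on the pseudoinverse identities $\bfO_\bfC^\rmT(\bfO_\bfC^\rmT)^\dagger = $ (orthogonal projector onto $\colspace(\bfO_\bfC^\rmT)=\colspace(\bfC)$) and $(\bfO_\bfC^\rmT)^\dagger\bfO_\bfC^\rmT=$ (orthogonal projector onto $\rowspace(\bfO_\bfC^\rmT)$), applied to matrices whose column/row spaces already sit inside the relevant subspaces. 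Point (ii) is the standard fact that the best rank-$r$ approximation of $\bfB$ (the truncated SVD) has column space contained in that of $\bfB$. Once these two facts are in hand the identity \eqref{eq:PiMr} follows by transporting the minimization back and forth along the isometry $\bfZ \mapsto \bfZ\bfO_\bfC^\rmT$ (isometric, that is, on the subspace of matrices with column space in $\colspace(\bfC)$, which is exactly where $\bfY$ and all competitors of interest live). I would also remark that the hypothesis $\colspace(\bfY)\subseteq\colspace(\bfC)$ is what guarantees $\bfB$ genuinely captures all of $\bfY$ (no information lost: $\bfB(\bfO_\bfC^\rmT)^\dagger = \bfY(\bfO_\bfC^\rmT\bfO_\bfC... )$ wait — $\bfY\bfO_\bfC^\rmT(\bfO_\bfC^\rmT)^\dagger = \bfY$ since $\bfO_\bfC^\rmT(\bfO_\bfC^\rmT)^\dagger$ is the projector onto $\colspace(\bfC)\supseteq\colspace(\bfY)$), so the correspondence is faithful on the objects that matter.
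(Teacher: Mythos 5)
Your argument is correct in substance and is precisely the proof that the paper itself omits by deferring to its references: factor $\bfC=\bfO_\bfC^{\rmT}\bfO_\bfC$ so that $\|\bfZ\|_\bfM=\|\bfZ\bfO_\bfC^{\rmT}\|_\rmF$, reduce the weighted problem to an ordinary truncated SVD of $\bfB=\bfY\bfO_\bfC^{\rmT}$ (Eckart--Young), and pull the minimizer back with the pseudoinverse. The only thing to repair is the row-space/column-space bookkeeping, which wobbles in several places --- partly because the proposition's own hypothesis is loosely stated: for $\bfY\in\spaceR^{L\times K}$ and $\bfC\in\spaceR^{K\times K}$ the containment $\colspace(\bfY)\subseteq\colspace(\bfC)$ does not typecheck, and the intended condition is $\rowspace(\bfY)\subseteq\colspace(\bfC)$ (the inner product $\tr(\bfZ\bfC\bfZ^{\rmT})$ is oblique in the \emph{row} space). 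Concretely: the image of $\calM_r$ under $\bfZ\mapsto\bfZ\bfO_\bfC^{\rmT}$ is the set of rank-$\le r$ matrices $\bfW$ with $\rowspace(\bfW)\subseteq\rowspace(\bfO_\bfC^{\rmT})$, not a column-space condition; the identity $\bfW(\bfO_\bfC^{\rmT})^\dagger\bfO_\bfC^{\rmT}=\bfW$ holds exactly under that row-space containment, since $(\bfO_\bfC^{\rmT})^\dagger\bfO_\bfC^{\rmT}$ is the orthogonal projector onto $\rowspace(\bfO_\bfC^{\rmT})$; your step (ii) should invoke the fact that the truncated SVD preserves containment of the \emph{row} space, $\rowspace(\Pi_r\bfB)\subseteq\rowspace(\bfB)\subseteq\rowspace(\bfO_\bfC^{\rmT})$, which makes $\Pi_r\bfB$ attainable; and the final identity $\bfY\bfO_\bfC^{\rmT}(\bfO_\bfC^{\rmT})^\dagger=\bfY$ uses $\rowspace(\bfY)\subseteq\colspace(\bfC)=\colspace(\bfO_\bfC^{\rmT})$. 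With those substitutions every step you wrote goes through, and the chain of reasoning is exactly the intended one.
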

\begin{proof}
	The proof is a direct consequence of the fact that the considered norm is generated by an oblique inner product in the row space of $\bfY$, see details in \cite{Golyandina2013,Allen2014}.
\end{proof}

\begin{remark}
\label{rem:diagC}
	In fact, the condition $\|\bfZ\|_\bfM^2 = \tr(\bfZ \bfC \bfZ^\rmT)$ of Proposition~\ref{prop:projS} can be fulfilled only if $\bfC$ is diagonal and $\bfM$ has a specific form, see Proposition~\ref{prop:equiv_tasks}.
\end{remark}

\subsubsection{The projector $\Pi_{\calM_r}$ in the general case.}
Since the projector can not be found explicitly for arbitrary weights $m_{ij}$, iterative algorithms are used in the general case.
One of these algorithms is described in \cite{Srebro2003}. Denote by $\odot$ the element-wise matrix product.

\begin{algorithm}
	\label{alg:weightedSVD}
	\textbf{Input}: initial matrix $\bfY$, rank $r$, weight matrix $\bfM$,
	stop criterion STOP.
	
	\textbf{Result}:
	Matrix $\widehat\bfY$ as an estimate of $\Pi_{\calM_r} \bfY$.
	
	\begin{enumerate}
		\item
		$\bfY_0 = \bfY$, $k=0$.
		\item
		$\bfY_{k+1} = \Pi_r(\bfY \odot \bfM + \bfY_{k} \odot (\bfQ -  \bfM))$, where
		$\bfQ \in \sfR^{L \times K}$ is the matrix of all ones;
        \quad $k\leftarrow k+1$.
		\item
		If STOP, then $\widehat\bfY = \bfY_k$; else go to 2.
	\end{enumerate}
\end{algorithm}

Note that in the case, when $m_{ij}$ are equal to either 0 or 1, Algorithm~\ref{alg:weightedSVD} is an EM-algorithm \cite{Srebro2003};
hence, properties of EM-algorithms are carried out and the sequence $\bfY_k$ converges to a local minimum. Formally, it does not matter what values are in $\bfY$ at positions of zero weights. However, these values can influence the algorithm's convergence rate and the limiting values.

\section{Time series and problem of matrix approximation}
\label{sec:ts_matrices}
\subsection{Problem statement for time series}
\label{sec:ts}
Consider a time series $\tsX = (x_1, \ldots, x_N)$ of length $N \ge 3$. Let us fix a window length $L$, $1 < L < N$, denote $K = N - L + 1$. Also consider a sequence of \emph{$L$-lagged vectors}:
\begin{equation}\label{l_lagged}
X_i = (x_i, \ldots, x_{i + L - 1})^\rmT, \qquad i = 1, \ldots, K.
\end{equation}
Define an \emph{$L$-trajectory matrix} of the series $\tsX$ as $\bfX = [X_1:\ldots:X_K]$.

Suppose that $0 < r \le L$. We say that the series $\tsX$ \emph{has $L$-rank $r$} if its $L$-trajectory matrix $\bfX$ has rank $r$.

Note that the series $\tsX$ can have $L$-rank $r$ only when
\begin{equation}
r \le \min(L, K). \label{min_condition}
\end{equation}

Further we suppose that $L$ is not larger than $K$, since the problems of approximation of $\bfX$ and $\bfX^\rmT$ coincide.

Let $\sfX_N$ be the set of time series of length $N$, $\sfX_N^r$ be the set of time series of length $N$ which has $L$-rank not larger than $r$. For a given time series $\tsX \in \sfX_N$, a window length $L$, $1 < L < N$, and a rank $r$ satisfying condition \eqref{min_condition}, consider the problem:
\begin{equation} \label{L-rank_task}
f_q(\tsY) \to \min_{\tsY \in \sfX_N^r}, \quad f_q(\tsY) = \sum \limits_{i=1}^N q_i(x_i - y_i)^2,
\end{equation}
where $\tsY = (y_1, \ldots, y_N)$ and $q_1, \ldots, q_N$ are some non-negative weights,
$q_i \ge 0$, $i = 1, \ldots, N$. The squared Euclidean distance to $\tsX$ in $\sfR^N$ is one of reasonable target functions. It coincides with $f_q(\tsY)$ when $q_i = 1$, $i = 1, \ldots, N$.

\paragraph*{Adjustment.} Let an estimate $\tsY \in \sfX_N^r$ of the solution of the problem \eqref{L-rank_task} for approximation of $\tsX \in \sfX_N$ be obtained. Then, according to Remark~\ref{rem:adj}, the estimate can be adjusted to obtain a better estimate $\tsY^*=\calA(\tsY)$, which is called \emph{an adjustment of $\tsY$}.

\subsection{Equivalent target functions}

Let $\tsX = (x_1, \ldots, x_N) \in \sfX_N$ be a time series of length $N$, $\bfX = (\hat x_{l,k}) \in \calH$. Then there exists a one-to-one mapping $\calT$ between $\sfX_N$ and $\calH$, which can be written as
\begin{equation*}
\calT(\tsX) = \bfX,\ \text{where} \ \hat x_{l, k} = x_{l + k - 1}.
\end{equation*}

Due to this one-to-one mapping, the problem~\eqref{L-rank_task} of time series approximation can be expressed in terms of matrices.

In the space $\sfX_N$ of time series, the target function \eqref{L-rank_task} can be given explicitly $f_q(\tsY)=\|\tsY-\tsX\|_q^2$ using a (semi)inner product
\begin{equation}
\label{eq:norm_ser}
\langle\tsY,\tsZ\rangle_q = \sum_{i = 1}^N q_i y_i z_i,
\end{equation}
where $q_i$ are positive (non-negative) weights.

Consider two (semi)inner products in the space $\sfR^{L \times K}$ of matrices which are extensions of the conventional Frobenius inner product.

Denote, as before,
\begin{equation}
\label{eq:norm1M}
\langle\bfY,\bfZ\rangle_{1,\bfM} = \langle\bfY,\bfZ\rangle_{\bfM} = \sum_{l = 1}^L \sum_{k=1}^K m_{l,k} y_{l,k} z_{l,k}.
\end{equation}
for a matrix $\bfM\in \spaceR^{L\times K}$ with positive (non-negative) elements and also
\begin{equation}
\label{eq:norm2S}
\langle\bfY,\bfZ\rangle_{2,\bfC} = \tr(\bfY \bfC \bfZ^\rmT)
\end{equation}
for a positive (semi)definite symmetric matrix $\bfC \in \spaceR^{K\times K}$.

Note that if the matrix $\bfM$ consists of all ones, i.e. $m_{i,j}=1$,
and if $\bfC$ is the identity matrix, then both inner products coincide with the standard Frobenius inner product.

\begin{proposition}
	\label{prop:equiv_tasks}
	1. Let $\bfY = \calT(\tsY)$,  $\bfZ = \calT(\tsZ)$. Then $\langle\tsY,\tsZ\rangle_q= \langle \bfY,\bfZ \rangle_{1,\bfM}$ if and only if
	\begin{equation}\label{qi_mi}
	q_i = \sum_{\substack{1 \le l \le L \\ 1 \le k \le K \\ l+k-1=i}} m_{l,k}.
	\end{equation}
	
	2. The equality $\langle\bfY,\bfZ\rangle_{1,\bfM}= \langle\bfY,\bfZ\rangle_{2,\bfC}$ is valid if and only if
the matrix $\bfC=\diag(c_1,\ldots,c_K)$ and
	\begin{equation}\label{sk_mlk}
	m_{l,k}=c_k.
	\end{equation}
\end{proposition}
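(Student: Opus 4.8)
The plan is to prove both parts by expanding the relevant inner products over the matrix entries and using the structure of the trajectory map $\calT$, which identifies the $(l,k)$-entry of a Hankel matrix with the $(l+k-1)$-st entry of the corresponding series.

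For part 1, I would start from the definition $\langle\bfY,\bfZ\rangle_{1,\bfM} = \sum_{l,k} m_{l,k} y_{l,k} z_{l,k}$ and substitute $y_{l,k} = y_{l+k-1}$, $z_{l,k} = z_{l+k-1}$ (the defining property of $\bfY = \calT(\tsY)$, $\bfZ = \calT(\tsZ)$). Grouping the double sum according to the value of $i := l+k-1$, which ranges over $1,\ldots,N$, I get $\langle\bfY,\bfZ\rangle_{1,\bfM} = \sum_{i=1}^N \left(\sum_{l+k-1=i} m_{l,k}\right) y_i z_i$. Comparing this with $\langle\tsY,\tsZ\rangle_q = \sum_{i=1}^N q_i y_i z_i$, the "if" direction is immediate once \eqref{qi_mi} holds. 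For the "only if" direction, I would note that the equality must hold for \emph{all} pairs $\tsY,\tsZ$; choosing $\tsY = \tsZ = e_i$ (the $i$-th coordinate series) isolates the coefficient of $y_i z_i$ on both sides and forces $q_i = \sum_{l+k-1=i} m_{l,k}$.

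For part 2, I would expand $\langle\bfY,\bfZ\rangle_{2,\bfC} = \tr(\bfY\bfC\bfZ^\rmT) = \sum_{l=1}^L \sum_{k,k'=1}^K c_{k',k}\, y_{l,k}\, z_{l,k'}$ and compare with $\langle\bfY,\bfZ\rangle_{1,\bfM} = \sum_{l,k} m_{l,k} y_{l,k} z_{l,k}$, which contains only "diagonal" terms $k = k'$. Since the two expressions must agree for \emph{all} matrices $\bfY,\bfZ \in \spaceR^{L\times K}$ (not just Hankel ones — this is the key point, the matrices here range over the full matrix space, so all entries are free), I can test against rank-one matrices $\bfY = e_l e_k^\rmT$ and $\bfZ = e_l e_{k'}^\rmT$: this picks out $c_{k',k}$ on the right and $m_{l,k}\delta_{k,k'}$ on the left. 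Taking $k \ne k'$ forces $c_{k',k} = 0$, so $\bfC$ is diagonal, say $\bfC = \diag(c_1,\ldots,c_K)$; taking $k = k'$ forces $m_{l,k} = c_k$ for every $l$, which is \eqref{sk_mlk}. Conversely, if $\bfC = \diag(c_1,\ldots,c_K)$ and $m_{l,k} = c_k$, then $\tr(\bfY\bfC\bfZ^\rmT) = \sum_{l,k} c_k y_{l,k} z_{l,k} = \sum_{l,k} m_{l,k} y_{l,k} z_{l,k}$, giving equality.

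Neither part presents a genuine obstacle; both are bookkeeping arguments about which test vectors to plug in. The one subtlety worth stating carefully is the quantifier scope in each "only if": in part 1 the equality is required over all series $\tsY,\tsZ$, while in part 2 it is required over all matrices in $\spaceR^{L\times K}$ — and it is precisely the freedom of all entries in part 2 (including non-Hankel ones, which do not correspond to any series) that forces $\bfC$ to be diagonal. I would make this explicit so the reader sees why $\bfC$ cannot be a general symmetric positive-semidefinite matrix, which also explains Remark~\ref{rem:diagC}.
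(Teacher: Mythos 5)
Your proof is correct and follows essentially the same route as the paper's: expand both inner products entrywise, group the Hankel entries by the anti-diagonal index $i=l+k-1$ for part~1, and compare coefficients of $y_{l,k}z_{l,k'}$ for part~2. The paper's own proof is just the two expansions with the ``only if'' directions left implicit; your explicit test-vector argument and the remark that in part~2 the equality is quantified over all of $\spaceR^{L\times K}$ (which is what forces $\bfC$ to be diagonal) merely make precise what the paper takes for granted.
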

\begin{proof}
	To prove the first statement, note that
	\begin{equation*}
	\langle \bfY, \bfZ \rangle_{1,\bfM} = \sum_{i = 1}^L \sum_{j = 1}^K m_{i,j} y_{i + j - 1} z_{i + j - 1}.
	\end{equation*}
The proof of the second statement is a consequence of the fact that only for a diagonal matrix $\bfC$ the corresponding inner product has a form appropriate to \eqref{eq:norm1M} (see also Remark~\ref{rem:diagC}):
	\begin{equation*}
	\langle \bfY, \bfZ \rangle_{2,\bfC} = \sum_{l=1}^L \sum_{k=1}^K c_k y_{l,k} z_{l, k}.
	\end{equation*}
\end{proof}

\begin{corollary}
	\label{cor:base_weights}
	If $m_{i,j}=1$, $i =1, \ldots, L$, $j = 1, \ldots, K$, then the equivalent series weights $q_i$, $i = 1, \ldots, N$, given by \eqref{qi_mi} are equal to $w_i$ introduced in \eqref{eq:w}.
\end{corollary}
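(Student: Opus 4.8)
The plan is to apply Proposition~\ref{prop:equiv_tasks}, part~1, directly with $m_{l,k}=1$ for all $l,k$. Then the formula \eqref{qi_mi} reduces to
\[
q_i = \#\{(l,k): 1\le l\le L,\ 1\le k\le K,\ l+k-1=i\},
\]
so $q_i$ simply counts the number of cells on the $i$-th anti-diagonal of an $L\times K$ matrix. Hence the entire corollary is a combinatorial bookkeeping exercise: I need to show that this count equals the piecewise-linear expression $w_i$ given in \eqref{eq:w}.

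First I would rewrite the constraint $l+k-1=i$ as $k = i+1-l$, so that $(l,k)$ ranges over admissible cells exactly when $1\le l\le L$ and $1\le i+1-l\le K$, i.e. when $\max(1,\,i+1-K)\le l\le \min(L,\,i)$. Therefore
\[
q_i = \min(L,i) - \max(1,\,i+1-K) + 1,
\]
provided this quantity is positive (which it is for $1\le i\le N=L+K-1$). Next I would split into the three ranges appearing in \eqref{eq:w}. For $1\le i\le L-1$: since $i<L$ we have $\min(L,i)=i$, and since $i\le L-1\le K-1$ (using $L\le K$) we have $i+1-K\le 0<1$, so $\max(1,i+1-K)=1$, giving $q_i=i-1+1=i$. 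For $L\le i\le K$: here $\min(L,i)=L$ and $i+1-K\le 1$, so the max is again $1$, giving $q_i=L-1+1=L$. For $K+1\le i\le N$: here $\min(L,i)=L$ (since $i\ge K+1\ge L+1$), and $i+1-K\ge 2>1$, so $\max(1,i+1-K)=i+1-K$, giving $q_i = L-(i+1-K)+1 = L+K-i = N-i+1$. In all three cases this matches \eqref{eq:w} exactly, which completes the proof.

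There is no real obstacle here; the only thing to be slightly careful about is the boundary behavior of the $\min$ and $\max$ at the endpoints $i=L-1,L,K,K+1$ (and the standing assumption $L\le K$, which is what makes the middle plateau nonempty and forces $\max(1,i+1-K)=1$ throughout the first two ranges). One should also note that the edge case $L=K$ collapses the middle range to the single index $i=L=K$, and the formulas remain consistent. Everything else is immediate from Proposition~\ref{prop:equiv_tasks}.
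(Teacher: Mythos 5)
Your proposal is correct and follows exactly the route the paper intends: the corollary is stated without proof precisely because, once Proposition~\ref{prop:equiv_tasks} is applied with $m_{l,k}\equiv 1$, the weight $q_i$ is just the number of cells on the $i$-th anti-diagonal of an $L\times K$ matrix, and your case analysis (using the standing assumption $L\le K$) verifies that this count coincides with $w_i$ from \eqref{eq:w}. Nothing is missing; the attention you pay to the boundary indices and to the degenerate case $L=K$ is appropriate but routine.
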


Note that the matrix norm $\|\cdot\|_{2, \bfC}$ with a diagonal matrix $\bfC$ is a particular case of the norm $\|\cdot\|_{1, \bfM}$.
However, this particular case is of special interest, since the corresponding approximation problem can be solved by means of the ordinary SVD, see Proposition~\ref{prop:projS}.

\begin{remark}
	\label{rem:2tasks}
	If the condition~\eqref{qi_mi} is carried out and all weights $q_i$ and $m_{i,j}$ are positive, then the problem~\eqref{L-rank_task}
	is equivalent to the problem
	\begin{multline}
	\label{rank_task}
	f_\bfM(\bfY) \to \min_{\bfY \in \calM_r \cap \calH}, \\ f^2_\bfM(\bfY) = \|\bfX-\bfY\|^2_{1,\bfM} = \sum_{l = 1}^L \sum_{k=1}^K m_{l,k} (x_{l,k} - y_{l,k})^2.
	\end{multline}
\end{remark}

\section{Algorithms}
\label{sec:alg}
In this section we suggest a range of algorithms for solving the problem~\eqref{L-rank_task}.
In the model of series $\tsX=\tsS+\tsN$, where $\tsS$ is a time series of finite rank $r$ and $\tsN$ is a noise series, results of the algorithms serve as  estimates of the signal $\tsS$.

\subsection{Cadzow iterations}
The aim of the Cadzow algorithm \cite{Cadzow1988} is the least-squares approximation \eqref{rank_task} of the trajectory matrix of a series with respect to the norm $\|\cdot\|_{1, \bfM}$ with the weights $m_{ij}=1$ (i.e. the algorithm solves the problem \eqref{introd_task}, which, by Corollary \ref{cor:base_weights}, corresponds to the problem~\eqref{introd_task_2} (or, the same, to the problem~\eqref{L-rank_task} with the weights $q_i=w_i$ given in \eqref{eq:w}). The drawback of this algorithm consists in the unequal series weights $w_i$: they are larger in the center than at both ends of the time series. Note that smaller window lengths leads to more uniform weights.

Note that in the case of unit weights $m_{ij}=1$, the projections $\Pi_\calH$ and $\Pi_{\calM_r}=\Pi_{r}$ can be easily calculated, see Sections~\ref{sec:projH}
and~\ref{sec:projMr}.

\begin{algorithm}[Cadzow iterations]
	\textbf{Input}: Time series $\tsX$, window length $L$, rank $r$,
	stop rule STOP1 (e.g., given by quantity of iterations).
	
	\textbf{Result}:
	Approximation $\widehat\tsS$ of time series $\tsX$ by finite-rank series of rank $r$.
	
	\begin{enumerate}
		\item
		$\bfY_0 = \calT \tsX$, $k=0$.
		\item
		$\bfY_{k+1} = \Pi_\calH  \Pi_{r} \bfY_{k}$, $k\leftarrow k+1$.
		\item
		If STOP1, then $\widehat\tsS = \calT^{-1} \bfY_k$; else go to 2.
	\end{enumerate}
\end{algorithm}

\subsection{Weighted Cadzow iterations}

Let $q_{i}=1$, $i = 1, \ldots, N$, be chosen in \eqref{L-rank_task}. According to Proposition ~\ref{prop:equiv_tasks}, the problem \eqref{L-rank_task} is equivalent to the problem \eqref{rank_task} with weights
\begin{equation}
\label{Mw}
m_{l, k} = \frac{1}{w_{l + k - 1}},
\end{equation}
where $w_i$ are introduced in \eqref{eq:w}.

\begin{algorithm}[Weighted Cadzow iterations]\label{alg:WCIt}
	\textbf{Input}: Time series $\tsX$, window length $L$, rank $r$,
	stop rules STOP1 for outer iterations and STOP2 for inner iterations.
	
	\textbf{Result}:
	Approximation $\widehat\tsS$ of time series $\tsX$ by finite-rank series of rank $r$.
	
	\begin{enumerate}
		\item
		$\bfY_0 = \calT \tsX$, $k=0$.
		\item
		Obtain $\widehat\bfZ$ using Algorithm ~\ref{alg:weightedSVD} applied to $\bfY_k$ for estimation of $\Pi_{\calM_r} \bfY_{k}$ with stop criterion STOP2.
		\item
		$\bfY_{k+1} = \Pi_\calH  \widehat\bfZ$, $k\leftarrow k+1$.
		\item
		If STOP1, then $\widehat\tsS = \calT^{-1} \bfY_k$; else go to 2.
	\end{enumerate}
\end{algorithm}

\subsection{Extended Cadzow iterations}
Let us introduce the Extended Cadzow algorithm, which presents a different approach to the problem \eqref{L-rank_task} with equal weights than the Weighted Cadzow algorithm does.
Formally, let the series $\tsX$ be extended to both sides on $L-1$ measurements with some values having zero weights, i.e., the added measurements are considered as gaps.
Thus, the length of the extended series $\widetilde\tsX$ is $N+2L-2$, and the size of its trajectory matrix $\widetilde\bfX$ is $L$ by $N+L-1$ (instead of $N-L+1$ for the non-extended trajectory matrix).

For the extended series, Algorithm~\ref{alg:weightedSVD} with weights $m_{i,j}=\calT \tsI$ is applied to $\widetilde\bfX$, where the series $\tsI$ has ones in the place of the series $\tsX$ and zeroes in positions of gaps, i.e.
\begin{equation*}
m_{i,j} = \begin{cases}
1 & 1 \le i+j-L \le N, \\
0 & \text{otherwise.}
\end{cases}
\end{equation*}

\begin{algorithm}[Extended Cadzow iterations]\label{alg:ECIt}
	\textbf{Input}: Time series $\tsX$, window length $L$, rank $r$,
	stop criteria STOP1 for outer iterations and STOP2 for inner iterations,
	left and right extension values $\tsL_{L-1}$ and $\tsR_{L-1}$.
	
	\textbf{Result}:
	Approximation $\widehat\tsS$ of time series $\tsX$ by finite-rank series of rank $r$.
	
	\begin{enumerate}
		\item
		$\widetilde\bfY_0 = \calT \widetilde\tsX$, where $\widetilde\tsX=(\tsL_{L-1}, \tsX, \tsR_{L-1})$, $k=0$.
		\item
		Obtain $\widehat\bfZ$ using Algorithm ~\ref{alg:weightedSVD} applied to $\widetilde\bfY_k$ for estimation of $\Pi_{\calM_r} \widetilde\bfY_{k}$ with stop criterion STOP2.
		\item
		$\widetilde\bfY_{k+1} = \Pi_\calH  \widehat\bfZ$, $k\leftarrow k+1$.
		\item
		Construct $\bfY_k$ consisting of the columns of the matrix $\widetilde\bfY_{k}$, from $L$-th to $N$-th ones. If STOP1, then $\widehat\tsS = \calT^{-1} \bfY_k$; else go to 2.
	\end{enumerate}
\end{algorithm}


\subsection{Oblique Cadzow iterations}
\label{sec:ObliqueCadzow}
Algorithms considered in this section generalize the conventional Cadzow algorithm based on the Euclidean inner product to the use of an oblique inner product given by a matrix $\bfC$.
These algorithms can be applied if the conditions of Proposition~\ref{prop:projS} hold.

\begin{algorithm}[Oblique Cadzow iterations]
	\label{alg:obliqueCadzow}
	\textbf{Input}: Time series $\tsX$, window length $L$, rank $r$, matrix $\bfC=\diag(c_1,\ldots, c_K)$, where $K=N-L+1$,
	stop criteria STOP1.
	
	\textbf{Result}:
	Approximation $\widehat\tsS$ of time series $\tsX$ by finite-rank series of rank $r$.
	
	\begin{enumerate}
		\item
		$\bfY_0 = \calT \tsX$, $k=0$.
		\item
		$\bfY_{k+1} = \Pi_\calH  \Pi_{\calM_r} \bfY_{k}$, $k\leftarrow k+1$, where
		$\Pi_{\calM_r}$ is given by \eqref{eq:PiMr}.
		\item
		If STOP1, then $\widehat\tsS = \calT^{-1} \bfY_k$; else go to 2.
	\end{enumerate}
\end{algorithm}

To solve the problem \eqref{L-rank_task} of approximation of time series with equal weights $q_i$, a proper matrix $\bfC$ should be chosen. It is found that there is no such full-rank matrix; therefore, a few variants providing approximately equal weights are considered below.

\subsubsection{Cadzow($\alpha$) iterations}
\label{sec:cadzow_alpha}
The following lemma describes a case, when the conditions of Proposition~\ref{prop:equiv_tasks} are fulfilled and therefore
the problem \eqref{L-rank_task} with equal weights $q_i$ is equivalent to the problem \eqref{rank_task}.

\begin{lemma}[\cite{Gillard2014}]
	\label{zhiglemma}
	Let $\tsX \in \sfX_N$, $\bfX = \calT(\tsX) \in \sfR^{L \times K}$. If $h = N/L$ is integer, then for $q_i\equiv 1$ we have $\|\tsX\|^2_q = \|\bfX\|^2_{2, \bfC}$, where $\bfC=\diag(c_1,\ldots,c_K)$ with diagonal elements
	\begin{equation*}
	c_k = \begin{cases}
	1, & \text{if} \; k = jL+1 \;\text{for some} \; j = 0, \ldots, h-1, \\
	0, & \text{otherwise}.
	\end{cases}
	\end{equation*}
\end{lemma}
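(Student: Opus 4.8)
The plan is to compute both sides of the claimed identity $\|\tsX\|_q^2 = \|\bfX\|_{2,\bfC}^2$ directly and verify they coincide. By definition, $\|\tsX\|_q^2 = \sum_{i=1}^N x_i^2$ since $q_i\equiv 1$. On the matrix side, using the diagonal form $\bfC = \diag(c_1,\dots,c_K)$ and the expression $\langle\bfY,\bfZ\rangle_{2,\bfC} = \sum_{l=1}^L\sum_{k=1}^K c_k y_{l,k}z_{l,k}$ established in the proof of Proposition~\ref{prop:equiv_tasks}, we get $\|\bfX\|_{2,\bfC}^2 = \sum_{l=1}^L\sum_{k=1}^K c_k \hat x_{l,k}^2 = \sum_{l=1}^L\sum_{k=1}^K c_k x_{l+k-1}^2$. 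So the task reduces to showing
\begin{equation*}
\sum_{l=1}^L \sum_{k=1}^K c_k\, x_{l+k-1}^2 = \sum_{i=1}^N x_i^2,
\end{equation*}
which, since the $x_i$ are arbitrary, is equivalent to the purely combinatorial statement that for every $i=1,\dots,N$ the number of pairs $(l,k)$ with $1\le l\le L$, $1\le k\le K$, $l+k-1=i$, and $c_k=1$ equals exactly $1$.

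By Proposition~\ref{prop:equiv_tasks}(1) (applied with $m_{l,k}=c_k$, which is legitimate because $\bfC$ is diagonal), this count is precisely the equivalent series weight $q_i = \sum_{l+k-1=i} c_k$, so I would phrase the verification as: the choice of $c_k$ makes $q_i\equiv 1$. Fix $i$. The index $k$ ranges over those values with $k=jL+1$ for some $j\in\{0,\dots,h-1\}$, and for each such $k$ the companion index is $l=i-k+1=i-jL$; this contributes to the sum iff $1\le l\le L$, i.e. $1\le i-jL\le L$, i.e. $jL < i \le (j+1)L$. Since the half-open intervals $(jL,(j+1)L]$ for $j=0,\dots,h-1$ partition $\{1,\dots,N\}$ (here is where $h=N/L$ being an integer is used: the intervals exactly tile $\{1,\dots,N\}$ with no overlap and no gap), there is exactly one admissible $j$, namely $j=\lceil i/L\rceil - 1$. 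For that $j$ one must also check $k=jL+1\le K=N-L+1$, which follows from $j\le h-1$ since then $jL+1\le (h-1)L+1 = N-L+1$. Hence exactly one pair $(l,k)$ contributes, giving $q_i=1$, as required.

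Finally, I would assemble the pieces: combining the two weight counts gives the displayed equality of quadratic forms, hence $\|\tsX\|_q^2 = \|\bfX\|_{2,\bfC}^2$.

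The main obstacle is essentially bookkeeping rather than depth: one must be careful that the index $l=i-jL$ genuinely lies in $\{1,\dots,L\}$ for the unique surviving $j$ and that simultaneously $k=jL+1$ does not exceed $K$; both are controlled by the divisibility hypothesis $h=N/L\in\mathbb{Z}$, and it is worth remarking explicitly that without this hypothesis the tiling of $\{1,\dots,N\}$ by the intervals $(jL,(j+1)L]$ breaks at the right end, so some $q_i$ would differ from $1$ and the lemma would fail.
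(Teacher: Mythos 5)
Your proof is correct. The paper states this lemma as a cited result from the reference and gives no proof of its own, so there is nothing to compare against; your direct verification --- reducing the identity to the combinatorial fact that for each $i$ exactly one pair $(l,k)$ with $l+k-1=i$ and $c_k=1$ survives, via the tiling of $\{1,\dots,N\}$ by the intervals $(jL,(j+1)L]$ --- is the natural argument, and the bookkeeping (uniqueness of $j$, the check $k=jL+1\le K$, and the role of the divisibility hypothesis) is all in order.
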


This approach has an essential drawback. Since zeroes are placed at the diagonal of the diagonal matrix $\bfC$, $\bfC$ has rank $h$, which is considerably smaller than $K$. The change of the diagonal zeroes to some small $\alpha$ is suggested in \cite{Gillard2014} to improve rank-deficiency.

Let
\begin{multline}\label{zhigweights}
c_k = c_k(\alpha) =\\ \begin{cases}
1, & \text{if} \; k = jL+1 \; \text{for some} \; j = 0, \ldots, h-1, \\
\alpha, & \text{otherwise.}
\end{cases}
\end{multline}
Then the matrix $\bfC(\alpha)=\diag(c_1(\alpha),\ldots,c_K(\alpha))$ with the diagonal given in \eqref{zhigweights} is of full rank.
However, the corresponding series weights are not equal.

Let Cadzow($\alpha$) denote the iterations performed by Algorithm~\ref{alg:obliqueCadzow} with the diagonal matrix $\bfC=\bfC(\alpha)$.
	Note that for $\alpha=1$ the matrix $\bfC(\alpha)$ is the identity matrix and the Cadzow($\alpha$) iterations coincide with the conventional Cadzow iterations.

\paragraph*{Degenerate case $\alpha=0$.}

Equality \eqref{sk_mlk} provides the form of a matrix $\bfM$ to obtain $\|\cdot\|_{1, \bfM} = \|\cdot\|_{2, \bfC}$ in the case $\alpha = 0$:
\begin{equation}
\label{eq:degenerateM}
\bfM = \begin{pmatrix}
1 & 0 & 0 & \cdots & 0 & 1 & 0 & \cdots & \cdots & 1 \\
1 & 0 & 0 & \cdots & 0 & 1 & 0 & \cdots & \cdots & 1 \\
\vdots & \vdots & \vdots & \cdots & \vdots & \vdots & \vdots & \cdots & \cdots & 1 \\
1 & 0 & 0 & \cdots & 0 & 1 & 0 & \cdots & \cdots & 1
\end{pmatrix}.
\end{equation}

\begin{remark}
	The optimization problem \eqref{rank_task} with the matrix $\bfM$ given in \eqref{eq:degenerateM} corresponds to the search of an arbitrary (not necessary Hankel) matrix of rank not larger than $r$, which is closest in the Frobenius norm to the matrix
	\be
	\label{eq:traj_noinersect}
	\begin{pmatrix}
		x_1&x_{L+1}&\cdots&x_{K}\\
		\vdots&\vdots&\cdots&\vdots\\
		x_L&x_{2L}&\cdots&x_N
	\end{pmatrix}.
	\ee
	This problem is quite different from the problem \eqref{L-rank_task} of approximation by finite-rank series. Therefore, the Cadzow($0$) algorithm does not solve the problem \eqref{L-rank_task}.
\end{remark}

\subsubsection{Cadzow-$\widehat{\bfC}$ iterations}
\label{sec:cadzow_hat}
Let us correct the rank-deficiency of $\bfC(0)$ by another way.

To obtain equal series weights $q_i\equiv 1$ in \eqref{L-rank_task}, we
should choose the weight matrix $\bfM$ in \eqref{rank_task} with weights $m_{i,j}$ defined in \eqref{Mw}.
Generally, there is no a matrix $\bfC$ providing the equivalent norm
$\|\cdot\|_{2,\bfC}=\|\cdot\|_{1,\bfM}$, since the matrix $\bfC$ should
be diagonal and therefore the matrix $\bfM$ should have columns consisting of equal elements (see Proposition~\ref{prop:equiv_tasks}).

To obtain approximately equal weights, the following approach is suggested.
Consider the set $\sfZ \subset \sfR^{L \times K}$ of matrices with columns consisting of equal elements and find $\what{\bfM}$ such that
\begin{equation*}
\|\bfM - \widehat\bfM\| \to \min_{\widehat\bfM \in \sfZ},
\end{equation*}
where $\|\cdot\|$ is the Frobenius norm.

The solution $\widehat\bfM$ is evidently constructed as the averaging of the matrix $\bfM$ by columns. As a result, the resultant matrix $\widehat\bfC$ such that $\|\cdot\|_{2,\widehat\bfC}=\|\cdot\|_{1,\widehat\bfM}$ has the form $\widehat\bfC = \text{diag}(\hat c_1, \ldots, \hat c_K)$, where
\begin{equation}\label{my_s}
\hat c_k = \frac{1}{L}\sum_{l=1}^L m_{l, k}.
\end{equation}


We call Algorithm~\ref{alg:obliqueCadzow} with the matrix $\bfC=\widehat\bfC$ Cadzow-$\widehat\bfC$ iterations.

\subsubsection{Weights $q_i$ in \eqref{L-rank_task} produced by the algorithms}
Since the norm $\|\cdot\|_{2, \bfC}$ with $\bfC(\alpha)$ or $\widehat\bfC$ in place of $\bfC$ does not correspond to equal series weights,
let us find $q_i(\alpha)$ and $\hat{q}_i$ from the equalities
$\|\bfY\|_{2,\widehat\bfC}=\|\tsY\|_{\hat{q}}$ and $\|\bfY\|_{2,\bfC(\alpha)}=\|\tsY\|_{q(\alpha)}$.
Formulas for calculation are provided in Proposition~\ref{prop:equiv_tasks}.

The following statements are valid.

\begin{proposition}\label{prop:zhigconseq}
	Let $h = N/L$ be integer, $\bfC(\alpha) = \diag(c_1(\alpha), \ldots, c_K(\alpha))$, where $c_i(\alpha)$ are given in \eqref{zhigweights}, $0 \le \alpha \le 1$. Then the weights $q_i(\alpha)$ have the form
	\begin{equation*}
	q_i (\alpha) = \begin{cases}
	1 + (i - 1) \alpha & \text{для $i = 1, \ldots, L-1,$}\\
	1 + (L - 1) \alpha & \text{для $i = L, \ldots, K-1,$}\\
	1 + (N - i) \alpha & \text{для $i = K, \ldots, N.$}
	\end{cases}
	\end{equation*}
\end{proposition}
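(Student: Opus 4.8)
The plan is to apply part~1 of Proposition~\ref{prop:equiv_tasks}, which expresses the equivalent series weights via $q_i = \sum_{l+k-1=i} m_{l,k}$, once we know the matrix $\bfM=\bfM(\alpha)$ for which $\|\bfY\|_{2,\bfC(\alpha)}^2 = \|\bfY\|_{1,\bfM(\alpha)}^2$. By part~2 of the same proposition, since $\bfC(\alpha)$ is diagonal, the equivalent matrix $\bfM(\alpha)$ has constant columns with $m_{l,k}=c_k(\alpha)$ for all $l=1,\ldots,L$. Hence $q_i(\alpha) = \sum_{l+k-1=i} c_k(\alpha) = \sum_{k} (\#\{l : 1\le l\le L,\ l = i-k+1\})\, c_k(\alpha)$, i.e.\ $q_i(\alpha)$ is a sum of $c_k(\alpha)$ over an appropriate index window of $k$-values, weighted by the multiplicity $w_i$ from \eqref{eq:w}. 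More precisely, the anti-diagonal $l+k-1=i$ runs over $k$ from $\max(1,i-L+1)$ to $\min(K,i)$, and each such $k$ occurs exactly once; so $q_i(\alpha) = \sum_{k=\max(1,i-L+1)}^{\min(K,i)} c_k(\alpha)$.

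Next I would evaluate this sum using the explicit form \eqref{zhigweights}: among the indices in the window, those of the form $k=jL+1$ contribute $1$ and all others contribute $\alpha$. So $q_i(\alpha) = (\text{length of the window}) \cdot \alpha + (\text{number of indices } jL+1 \text{ in the window})\cdot(1-\alpha)$. The window length is exactly $w_i$ as in \eqref{eq:w}: it equals $i$ for $1\le i\le L-1$, equals $L$ for $L\le i\le K$, and equals $N-i+1$ for $K+1\le i\le N$. The remaining ingredient is the count of multiples-of-$L$-plus-one in each window; here the hypothesis $h=N/L\in\mathbb{Z}$ is used, since the indices $1, L+1, 2L+1, \ldots, (h-1)L+1$ are exactly the $k$ with $c_k=1$, and $K = N-L+1 = (h-1)L+1$, so the last such index is precisely $k=K$. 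A short case analysis on the three ranges of $i$ then shows this count equals $1$ in every case: for $i\le L-1$ the window $\{1,\ldots,i\}$ contains only $k=1$; for $L\le i\le K$ the window is an interval of $L$ consecutive integers inside $[1,K]$, hence contains exactly one index $\equiv 1\pmod L$; and for $i>K$ the window $\{i-L+1,\ldots,K\}$ has length $N-i+1<L$ and contains $k=K$ but no smaller such index (the previous one, $K-L=(h-2)L+1$, lies below $i-L+1$ when $i>K$). Substituting count $=1$ gives $q_i(\alpha) = w_i\alpha + (1-\alpha) = 1 + (w_i-1)\alpha$, which reduces to the three stated formulas after plugging in the three values of $w_i$.

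The main obstacle is the bookkeeping in the boundary case analysis — keeping track of which indices $jL+1$ fall inside each of the three anti-diagonal windows and verifying the count is $1$ throughout, especially at the transition points $i=L-1$, $i=L$, $i=K$, $i=K+1$. The key observation that makes this clean is that $h$ being an integer forces $K=(h-1)L+1$ to itself be one of the special indices, so the special indices tile $[1,K]$ in blocks of length $L$ with the last block ending exactly at $K$; every length-$\le L$ window then meets at most one block-start, and the windows arising for $i=1,\ldots,N$ each meet exactly one. Everything else is a direct substitution and requires no further argument.
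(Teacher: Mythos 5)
Your argument is correct and is exactly the route the paper intends: the paper's proof consists of the single sentence that the result is a straightforward consequence of Proposition~\ref{prop:equiv_tasks}, and you have simply carried out that computation in full — identifying $m_{l,k}=c_k(\alpha)$ via part~2, summing over the anti-diagonal window via part~1, and using $K=(h-1)L+1$ to see that each window contains exactly one special index. The boundary checks (in particular that $i=K$ gives $1+(L-1)\alpha=1+(N-K)\alpha$, consistent with both branches of the stated formula) are all sound.
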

\begin{proof}
	The proof is a straightforward consequence of Proposition~\ref{prop:equiv_tasks}.
\end{proof}

To illustrate the form of the weights $\hat{q_i}$, let us formulate propositions with simplifying conditions.
\begin{proposition} \label{myweightstat}
	Let $N \ge 3(L-1)$. Then the diagonal matrix weights $\hat c_k$ defined in \eqref{my_s} are equal to
	\begin{equation*}
	\hat c_k = \begin{cases}
	\frac{1}{L}\left(\frac{k}{L} + \sum_{j=k}^{L-1} \frac{1}{j} \right),& 1 \le k \le L-1, \\
	1/L, & L \le k \le K - L + 1,\\
	\hat c_{K - k + 1}, & K - L + 2 \le k \le K.
 	          \end{cases}
	\end{equation*}
\end{proposition}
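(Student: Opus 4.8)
The plan is to compute $\hat c_k = \frac{1}{L}\sum_{l=1}^L m_{l,k}$ directly from the definition $m_{l,k} = 1/w_{l+k-1}$, using the piecewise formula \eqref{eq:w} for the series weights $w_i$. First I would note that as $l$ ranges over $1,\ldots,L$, the index $i = l+k-1$ ranges over $k, k+1, \ldots, k+L-1$, so $\hat c_k = \frac{1}{L}\sum_{i=k}^{k+L-1} \frac{1}{w_i}$. The condition $N \ge 3(L-1)$ guarantees that $K = N-L+1 \ge 2L-1$, which forces the ``central plateau'' of the weights $w_i = L$ (valid for $i = L,\ldots,K$) to be long enough that for the middle range of $k$ the whole window $\{k,\ldots,k+L-1\}$ sits inside $[L,K]$; this is what produces the clean value $\hat c_k = \frac{1}{L}\cdot L\cdot\frac{1}{L} = 1/L$.

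Next I would treat the three cases separately. For $L \le k \le K-L+1$: every index $i \in \{k,\ldots,k+L-1\}$ satisfies $L \le i \le K$, so $w_i = L$ for all of them, giving $\hat c_k = \frac{1}{L}\cdot L \cdot \frac{1}{L} = 1/L$. For $1 \le k \le L-1$: the window $\{k,\ldots,k+L-1\}$ straddles the left boundary — indices $i = k,\ldots,L-1$ fall in the increasing part where $w_i = i$, and indices $i = L,\ldots,k+L-1$ (there are $k$ of them) fall in the plateau where $w_i = L$; one must also check $k+L-1 \le K$, which holds since $k+L-1 \le 2L-2 \le K$ under the hypothesis. This yields $\hat c_k = \frac{1}{L}\left(\sum_{j=k}^{L-1}\frac{1}{j} + k\cdot\frac{1}{L}\right) = \frac{1}{L}\left(\frac{k}{L} + \sum_{j=k}^{L-1}\frac{1}{j}\right)$, matching the claimed formula. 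The right-boundary case $K-L+2 \le k \le K$ follows by the symmetry $w_i = w_{N-i+1}$: substituting $i \mapsto N-i+1$ in the sum and re-indexing shows $\hat c_k = \hat c_{K-k+1}$, and since $K-k+1$ then lies in $\{1,\ldots,L-1\}$, the first-case formula applies.

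The routine part is bookkeeping of the index ranges; the only real thing to verify carefully is that the hypothesis $N \ge 3(L-1)$ is exactly what is needed so that the three index regimes do not overlap and each sum decomposes cleanly — specifically that $K - L + 1 \ge L$, i.e. $N - 2L + 2 \ge L$, i.e. $N \ge 3L-2 = 3(L-1)+1$; a slight discrepancy with the stated bound $N \ge 3(L-1)$ should be checked (at the boundary $N = 3(L-1)$ one has $K-L+1 = L-1$, so the middle range is empty and the left and right ranges meet at $k = L-1$ and $k = L$ respectively, which is still consistent). The main obstacle, such as it is, is just making sure the endpoint indices are handled without off-by-one errors in the two overlapping-sum decompositions; there is no conceptual difficulty once the translation $\hat c_k = \frac{1}{L}\sum_{i=k}^{k+L-1} w_i^{-1}$ is in hand.
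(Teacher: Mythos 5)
Your proof is correct and follows exactly the route the paper intends: its entire proof is the one-line remark that one should substitute $m_{l,k}$ from \eqref{Mw} into \eqref{my_s}, and you simply carry out that substitution with the index bookkeeping made explicit (including the correct observation that $N\ge 3(L-1)$ is precisely what makes the left window $\{k,\dots,k+L-1\}$ stay within $[1,K]$, and that the middle range may be empty at the boundary without affecting the statement). Nothing further is needed.
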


\begin{proof}
	To prove the proposition, it is sufficient to substitute $m_{l,k}$ defined in \eqref{Mw} to \eqref{my_s}.
\end{proof}

\begin{proposition} \label{myserweightstat}
	Let $N \ge 4(L-1)$. Define
	\begin{equation*}
	\hat{u}_i = \begin{cases}
	\frac{i(i+1)}{2 L^2} + \frac{i}{L}(1 + H_{L-1} - H_i), &1 \le i \le L-1, \\
	1 + \frac{2iL-i-i^2}{2L^2} + \frac{L-i}{L}(H_{L-1} - H_{i - L}), & L \le i \le 2L-1,
	\end{cases}
	\end{equation*}
	where $H_0 = 0$, and $H_i = \sum_{j=1}^i 1/j$ is the $i$-th harmonic number.
	 Then the weights $\hat{q}_i$ have the form:
	\begin{equation*}
	\hat{q}_i = \begin{cases}
	\hat{u}_i, &1 \le i \le 2L-1, \\
		1, &2L \le i \le N-2L+1,\\
\hat{u}_{N-i+1}, &N-2L+2 \le i \le N. \\
	\end{cases}
	\end{equation*}
\end{proposition}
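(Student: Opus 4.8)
The plan is to compute the equivalent series weights $\hat q_i$ directly from the diagonal matrix weights $\hat c_k$ of Proposition~\ref{myweightstat} via the conversion formula \eqref{qi_mi} of Proposition~\ref{prop:equiv_tasks}, which in the present setting (where $m_{l,k}=\hat c_k$ by \eqref{sk_mlk}) reads
\[
\hat q_i=\sum_{\substack{1\le l\le L,\ 1\le k\le K\\ l+k-1=i}}\hat c_k=\sum_{k}\hat c_k\cdot\#\{l:\ 1\le l\le L,\ 1\le i-k+1\le L\}.
\]
So the first step is to identify, for each $i$, the range of $k$ contributing to the anti-diagonal $l+k-1=i$ and the multiplicity with which each $\hat c_k$ appears; for $i$ in the ``bulk'' this is $k=i-L+1,\dots,i$ with multiplicity one each, while near the ends the range is truncated. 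Since we assume $N\ge 4(L-1)$, the anti-diagonals for $1\le i\le 2L-1$ only touch the left band $1\le k\le L-1$ (and the flat part $\hat c_k=1/L$), those for $2L\le i\le N-2L+1$ see only flat weights $1/L$, and by the left-right symmetry $\hat c_k=\hat c_{K-k+1}$ the right end mirrors the left. This symmetry argument reduces the problem to verifying $\hat q_i=\hat u_i$ for $1\le i\le 2L-1$ and $\hat q_i=1$ in the bulk.

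The bulk case is immediate: for $2L\le i\le N-2L+1$ the anti-diagonal $l+k-1=i$ is a full set of $L$ entries, all equal to $1/L$, giving $\hat q_i=L\cdot(1/L)=1$. For the left end, the second step is to split the sum according to Proposition~\ref{myweightstat}: write $\hat q_i=\sum_{k=\max(1,i-L+1)}^{\min(i,L-1)}\hat c_k^{\text{band}} + (\text{number of flat terms})\cdot\frac1L$, where $\hat c_k^{\text{band}}=\frac1L\bigl(\frac kL+\sum_{j=k}^{L-1}\frac1j\bigr)=\frac1L\bigl(\frac kL+H_{L-1}-H_{k-1}\bigr)$. One then carries out the finite sums. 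For $1\le i\le L-1$ the anti-diagonal is $k=1,\dots,i$, all in the band, so
\[
\hat q_i=\frac1L\sum_{k=1}^{i}\Bigl(\frac kL+H_{L-1}-H_{k-1}\Bigr)
=\frac1L\Bigl(\frac{i(i+1)}{2L}+iH_{L-1}-\sum_{k=1}^{i}H_{k-1}\Bigr),
\]
and using the standard identity $\sum_{k=1}^{i}H_{k-1}=\sum_{k=0}^{i-1}H_k=(i)H_{i-1}-(i-1)$ — or equivalently $\sum_{k=1}^{i}H_{k-1}=iH_i-i$ — this collapses to $\hat u_i=\frac{i(i+1)}{2L^2}+\frac iL(1+H_{L-1}-H_i)$. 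For $L\le i\le 2L-1$ the anti-diagonal is $k=i-L+1,\dots,i$; the entries with $k\le L-1$ (there are $2L-1-i$ of them, namely $k=i-L+1,\dots,L-1$) are band entries and the remaining $i-L+1$ entries with $L\le k\le i$ are flat and contribute $\frac{i-L+1}{L}$. Summing the band part over $k=i-L+1,\dots,L-1$ and simplifying the resulting telescoping/harmonic sums yields the second branch of $\hat u_i$.

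The main obstacle is purely the bookkeeping of these harmonic-number sums: keeping the index ranges correct at the transition $i=L-1\to i=L$ (where the anti-diagonal first leaves the band and starts picking up flat entries), correctly counting how many band versus flat terms each anti-diagonal contains, and applying the identity $\sum_{k\le i}H_k=(i+1)H_{i}-i$ (and the partial-range analogue $\sum_{k=a}^{b}H_{k-1}$) without off-by-one errors. There is no conceptual difficulty — everything follows from Proposition~\ref{prop:equiv_tasks} and Proposition~\ref{myweightstat} — but one must be careful that the condition $N\ge 4(L-1)$ is exactly what guarantees the left-end anti-diagonals ($i\le 2L-1$), the flat middle, and the right-end anti-diagonals ($i\ge N-2L+2$) do not overlap, so that the three-branch formula is well defined and the symmetry $\hat q_i=\hat q_{N-i+1}$ applies verbatim.
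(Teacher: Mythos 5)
Your proposal is correct and follows essentially the same route as the paper: express $\hat q_i$ as the anti-diagonal sum of the $\hat c_k$ from Proposition~\ref{myweightstat} via Proposition~\ref{prop:equiv_tasks}, split each anti-diagonal into band and flat contributions, compute the two left-end branches, note the middle weights are $1$, and invoke the symmetry $\hat c_k=\hat c_{K-k+1}$ for the right end. The only (cosmetic) difference is that you close the harmonic sums with the identity $\sum_{k=1}^{i}H_{k-1}=iH_i-i$ where the paper instead swaps the order of summation and evaluates $\sum_k \min(k,i)/k$; both yield the stated $\hat u_i$.
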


\begin{proof}
	For $1 \le i \le L-1$, we have
	\begin{multline*}
	\hat{q}_i = \sum_{j=1}^i \hat{c}_j = \sum_{j=1}^i \frac{1}{L}\left(\frac{j}{L} + \sum_{k=j}^{L-1}\frac{1}{k}\right)\! =
	\frac{i(i+1)}{2L^2} + \\ + \frac{1}{L} \sum_{k = 1}^{L-1} \sum_{j=1}^{\min(k,i)} \frac{1}{k} = \frac{i(i+1)}{2L^2}+\frac{1}{L} \sum_{k = 1}^{L-1} \frac{\min(k,i)}{k} = \\ = \frac{i(i+1)}{2 L^2} + \frac{i}{L}(1 + H_{L-1} - H_i).
	\end{multline*}
	For $L \le i \le 2L-1$, changing the order of summation, we obtain
	\begin{multline*}
	\hat{q}_i = \sum_{j = 1}^L \hat{c}_{i-L+j} = \sum_{j = i - L + 1}^{L - 1} \hat{c}_j + \frac{i - L + 1}{L} =\\
	=\frac{i - L + 1}{L} + \frac{1}{L^2} \sum_{j = i - L + 1}^{L-1}j + \frac{1}{L} \sum_{j = i-L + 1}^{L-1} \sum_{k=j}^{L-1}\frac{1}{k} =\\
	=\frac{i - L + 1}{L} + \frac{2iL - i - i^2}{2L^2} + \frac{1}{L} \sum_{k = i - L + 1}^{L - 1} \sum_{j = i - L + 1}^k \frac{1}{k} =\\
	=1 + \frac{2iL-i-i^2}{2L^2} + \frac{L-i}{L}(H_{L-1} - H_{i - L}).
	\end{multline*}
	The weights $\hat{q}_i$ for $N-2L+2 \le i \le N$ are calculated by symmetry. The center series weights are evidently equal to 1.
\end{proof}

\begin{figure}[!hhh]
\begin{center}
		\includegraphics[width =0.7\columnwidth]{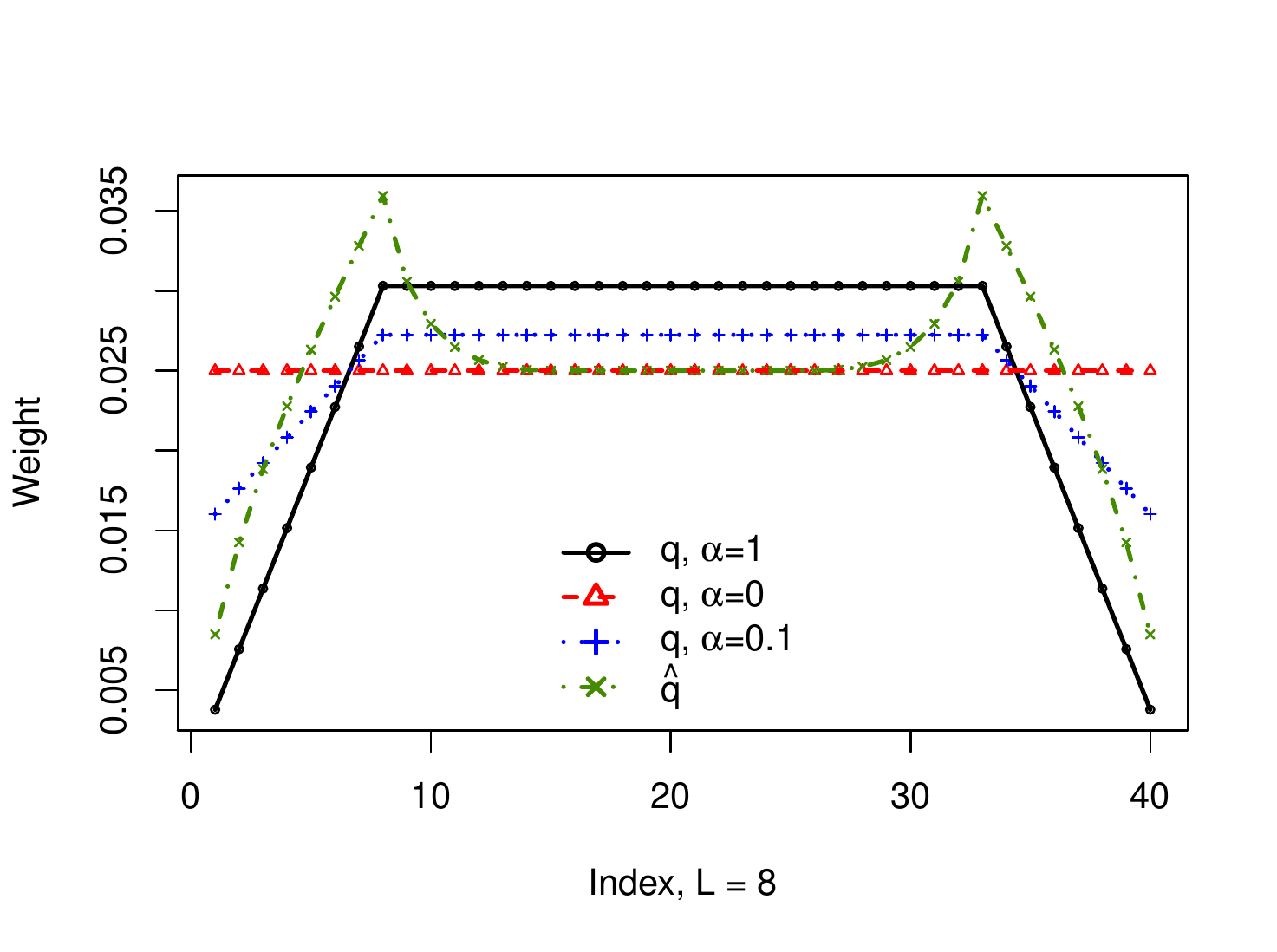}
\end{center}
\caption{Normalized series weights $q_i$ corresponding to $\bfC(\alpha)$ and $\widehat\bfC$.}\label{img_weights}
\end{figure}

Let us normalize series weights so that their sums equal 1.
The normalized weights $q_i(\alpha)$, for $\alpha = 1$ (the conventional Cadzow iterations), $\alpha = 0$ (equal $q_i$), $\alpha = 0.1$,
and $\hat{q}_i$ for $N = 40$, $L = 8$, are shown in Figure~\ref{img_weights}.
	
	\subsection{Comments to algorithms. Comparison}
\label{sec:comments}
	
	Let us comment and compare the following methods: the Weighted Cadzow iterations (Algorithm \ref{alg:WCIt}), the Extended Cadzow iterations (Algorithm \ref{alg:ECIt}), the Cadzow($\alpha$) iterations, $0< \alpha \leq 1$, coinciding with the conventional Cadzow iterations if $\alpha=1$,
	and finally the Cadzow-$\widehat\bfC$ iterations (Algorithm \ref{alg:obliqueCadzow}).
	Note that the window length $L$ is a parameter for each of the considered methods.
	
	\begin{itemize}
		\item \textit{Theoretical convergence.}
		Theorem \ref{th:converg} provides conditions for the existence of a subsequence, which converges to a matrix from $\calM_r \cap \calH$. This theorem is applicable directly to Algorithm \ref{alg:obliqueCadzow} if all weights are positive and to Algorithm \ref{alg:WCIt} if to suppose that the weighted projection to $\calM_r$ can be calculated with no error. It is easy to extend Theorem \ref{th:converg} to be applicable to Algorithm \ref{alg:ECIt} where the weights for added values are zero, if to consider the sequence  $\bfY_k$ instead of $\widetilde\bfY_k$.
		\item \textit{Convergence in practice.} Although the theory says about the existence of converging subsequences, the convergence of the constructed sequences took place in all the training examples.
		\item \textit{Comparison by accuracy.}
		The methods are iterative, and convergence to the global minimum in the corresponding least-squares problem does not necessarily take place. Therefore, different algorithms corresponding to the same weights can yield different approximations. Hence, the comparison of the algorithms by the approximation accuracy makes sense.		
\item \textit{Signal estimation and series approximation.}
		The proposed methods can be considered as both approximation methods of the original series by finite-rank series and weighted least-squares methods for signal estimation. Note that generally the approximation quality can contradict to the estimation accuracy due to possible over-fitting.
		\item \textit{Algorithms and series weights.}
		The Weighted Cadzow and Extended Cadzow methods try to solve the problem \eqref{L-rank_task} with equal weights $q_i$. The other methods work with weights with different levels of non-uniformity.
		\item \textit{Algorithms and computational costs.}
		All suggested algorithms are iterative. However, each outer iteration in the Weighted Cadzow and Extended Cadzow algorithms has a step with inner iterations. Therefore, these algorithms are very time-consuming. The other algorithms do not contain inner iterations; moreover, they have similar computational costs of one iteration and can be compared by the number of iterations.
		Computational complexity is described by both complexity of one iteration and the number of iterations. Evidently, the necessary number of iterations is determined by the convergence rate.
\item \textit{Fast implementation.}
There is a very fast implementation of iterations of the Cadzow algorithm suggested in \cite{Korobeynikov2010} and extended in \cite{Golyandina.etal2015}. However, it can be shown that the same implementation approach can be applied to the Cadzow($\alpha$) and Cadzow-$\widehat{\bfC}$ algorithms. Therefore, fast implementations of these algorithms still can be compared by the number of iterations.
		\item \textit{Use of the first iteration for signal estimation.}
		One iteration of the Cadzow iterations is exactly the well-known Singular Spectrum Analysis (SSA) method, which can solve a significantly wider range of tasks than the iterative method does. By analogy, together with the limiting series, we are interested in the signal estimation by means of the first iteration of the considered algorithms. In a sense, each iterative method produces a modification of SSA. The first iteration is generally not of finite rank; however, it has low computational complexity and can provide sufficient accuracy.
		\item \textit{Separability, the first iteration and the convergence rate.}
		Separability of a signal, which is an important concept of the SSA method, means the ability of a method to (approximately) separate the signal from a residual. From the viewpoint of the iterative methods,  the separability quality is closely related to the accuracy of the first iteration of the method. On the other hand, we can expect that the accuracy of the first iteration is connected with the method's convergence rate. Therefore, the separability accuracy is connected with the convergence rate of iterative methods.
		\item \textit{Separability and choice of parameters.}
		The connection between the separability and the window length $L$ is well studied for the SSA method, see \cite{Golyandina2010}. In particular, optimal window lengths are close to half of the series length. A small window length $L$ provides poor separability. We can expect that this is valid for the other algorithms too.
The Cadzow($\alpha$) method has an additional parameter $\alpha$.
Influence of the parameter $\alpha$ on separability in the class of Cadzow($\alpha$) iterations is investigated in Appendix~\ref{sec:app}.
 The studied example of separability of a sine-wave signal from a constant residual shows that small values of $\alpha$ provide poor separability.
        \item \textit{Equal series weights and choice of parameters.}
        Let us consider the dependence of series weights produced by the Cadzow($\alpha$) algorithm on the window length $L$ or $\alpha$. Proposition~\ref{prop:zhigconseq} shows that more uniform weights are achieved for small $L$ and small $\alpha$. This is exactly
        the case corresponding to poor separability.
 		\item \textit{Equal series weights and accuracy of signal estimation.}
 Thus, the weights, which are close to equal ones, correspond to algorithms, which either have a time-consuming iteration step with inner iterations or are slowly convergent; therefore such algorithms have high computational complexity.
There are no theoretical results about the behavior of the estimation accuracy in dependence on algorithms and their parameters. However, the numerical study shows that the best accuracy is achieved in the algorithms corresponding to the weights, which are equal or almost equal.	 \end{itemize}
	
	\begin{remark}
		\label{rem:adjust}
		The adjustment $\calA$, which is suggested in Section~\ref{sec:ts} for improvement of estimates, can be applied to the resultant signal estimation $\widehat\tsS$ for any considered algorithm. The inner product used in Remark~\ref{rem:adj} for definition of $\calA$ is the standard Euclidean inner product not depending on the weight matrix $\bfM$ used in the algorithms, since this norm $\|\cdot\|$ corresponds to the problem \eqref{L-rank_task} with equal weights $q_i$. We will call the algorithms with the adjustment $\calA$ \emph{adjusted algorithms}. For example, the result of the $k$-th iteration of the Cadzow iterations can be expressed as $\widehat\tsS_k = \calT^{-1}(\Pi_\calH \Pi_{\calM_r})^k \calT \tsX$. Then the result of the $k$-th iteration of the adjusted Cadzow iterative method is $\widehat\tsS_k^*=\calA(\widehat\tsS_k)$.
	\end{remark}

\section{Numerical comparison}
\label{sec:simul}
Let us carry out numerical experiments for analysis of the performance of the considered methods. Comparison of the methods was performed on several examples, with a sine-wave signal and an exponentially-modulated sine-wave signal.
Since the obtained comparison results are very similar, only the results for a sine-wave signal are presented.

Suppose that the signal $\tsS = (s_{1}, \ldots, s_N)$ of length $N = 40$ and rank $r=2$ has the form:
\begin{equation}
\label{eq:signal}
s_{k} = 5\sin{\frac{2 \pi k}{6}}, \quad k = 1, \ldots, N,
\end{equation}
and the series $\tsX = \tsS + \tsN$ is observed, where  $\tsN$ is Gaussian white noise with mean equal to $0$ and variance equal to $1$. Accuracy of a signal estimate $\widehat\tsS$ is measured as the root mean-square error (RMSE) using 1000 simulations. Comparison is performed on the same simulated samples. It was checked that the stated comparison results are significant at the 5\% level of significance.

\smallskip
\textit{Convergence rate and accuracy.} We start with the investigation of the Cadzow-$\widehat\bfC$ method and the Cadzow($\alpha$) methods for several values of $\alpha$, since they have not internal iterations and therefore their computational costs can be compared by the number of external iterations. These methods use an oblique SVD; the Cadzow($1$) method is the conventional Cadzow method. Figure~\ref{img_cadzowspeed2} shows the rate of convergence for $\alpha = 0.1$ and $\alpha = 1$ and for two different window lengths $L$. The RMSE values are depicted versus the number of performed iterations.

\begin{figure}[!hhh]
		\includegraphics[width = \columnwidth]{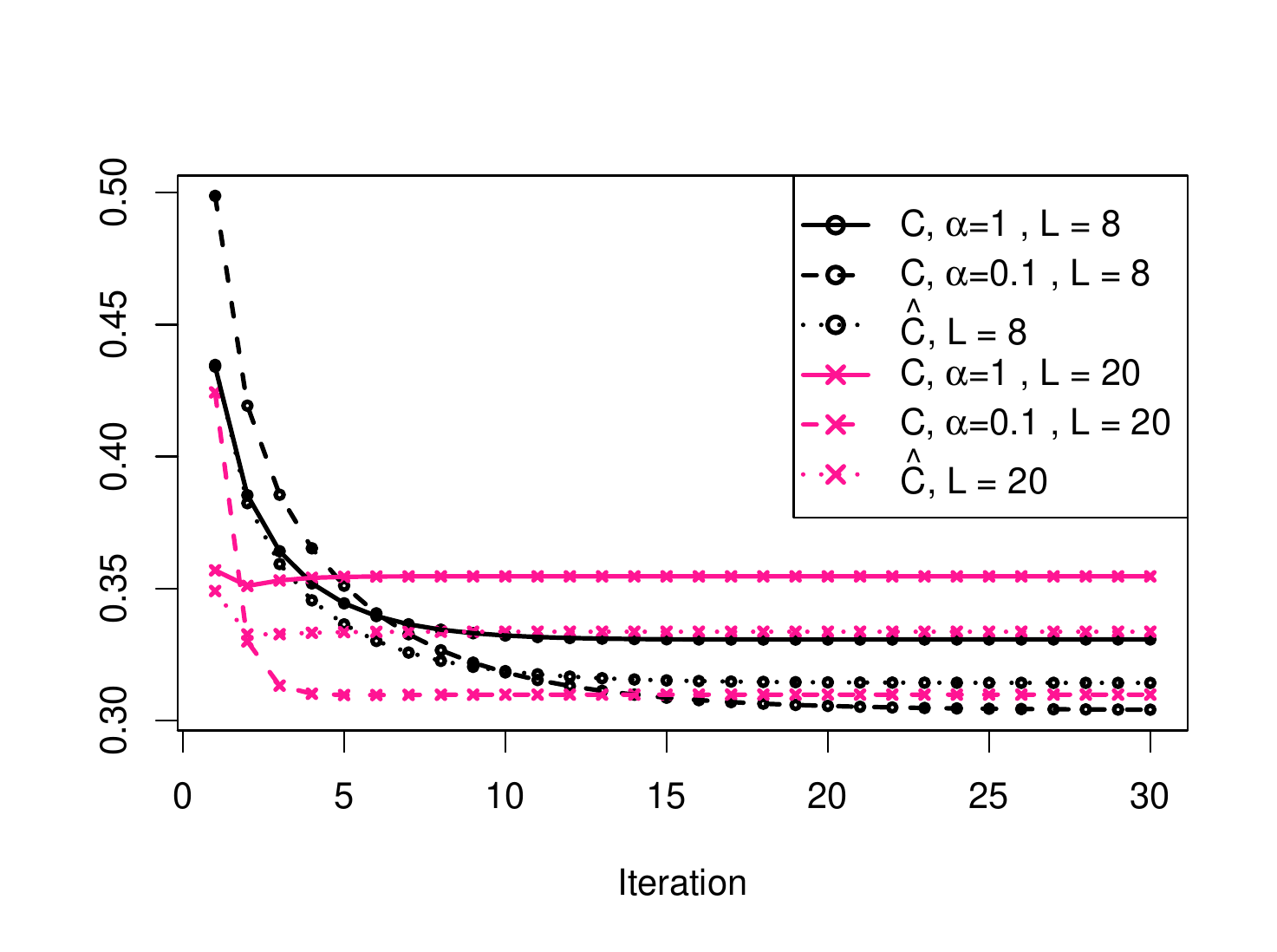}
		\caption{The RMSE of the signal estimate depending on the number of iterations,
        $\sigma=1$.}
		\label{img_cadzowspeed2}
\end{figure}

One can see that a method with a smaller limit error is the one with a slower convergence rate. For parameters involved to the simulations, the Cadzow($0.1$) method with the window length $L=8$ has the smallest limit error. At the same time, these values of parameters correspond to both the slowest convergence and the most uniform weights.

Note that the limit errors do not differ strongly, they change from 0.31 ($\alpha=0.1$, $L=8$) in the best case to 0.35 ($\alpha=1$, $L=20$) in the worst case. However, the error equal to 0.35 is achieved at the first iteration in the worst case, while it takes 4--5 iterations to achieve the error 0.35 in the best case.

\begin{figure}[!hhh]
		\begin{center}
	\includegraphics[width = 0.7\columnwidth]{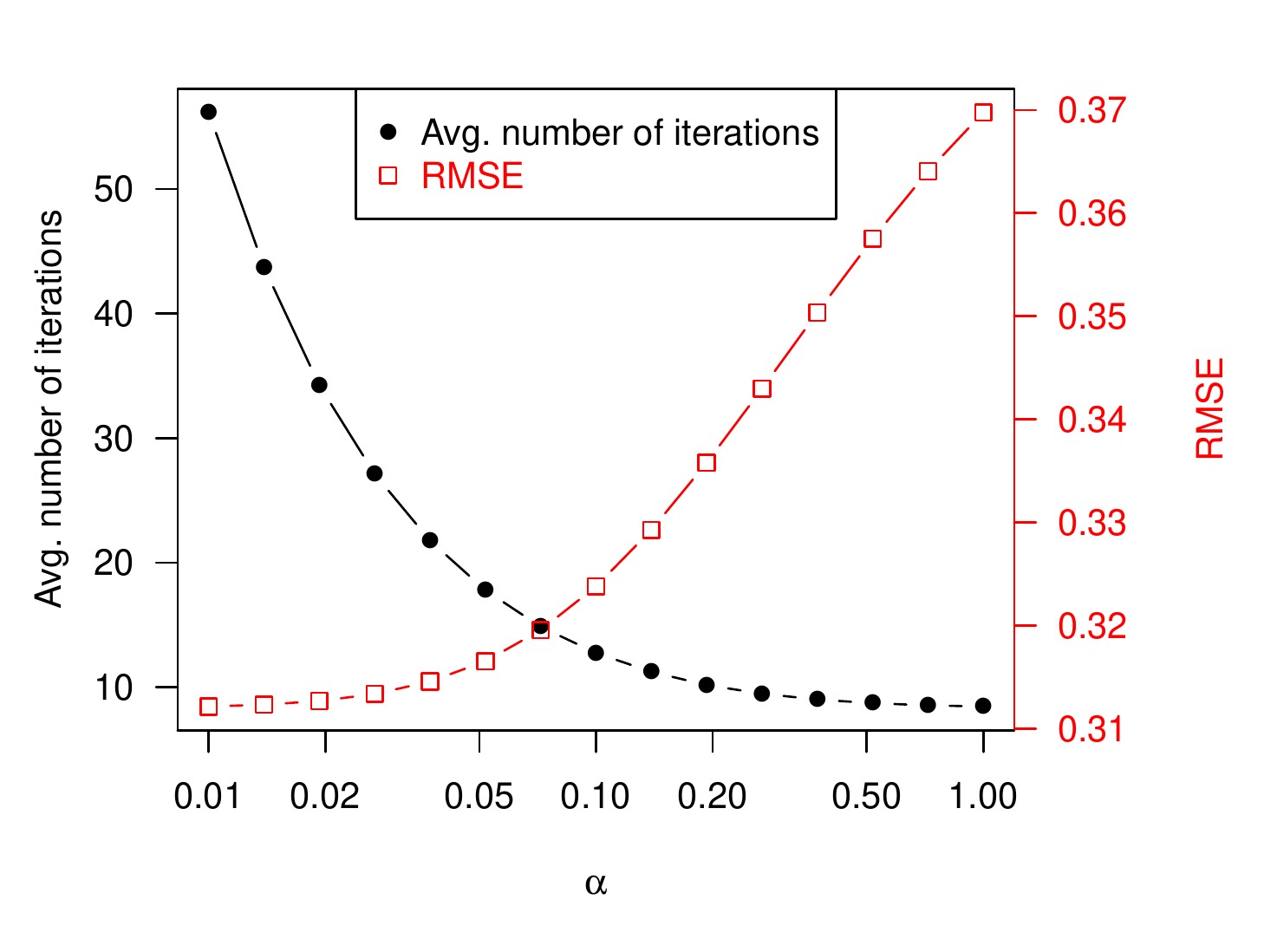}
	\caption{The RMSE and the average number of iterations depending on $\alpha$ (log-scale), $L=20$, $\sigma=1$.}
	\label{img_2axis}
		\end{center}
\end{figure}

\smallskip
\textit{Accuracy vs number of iterations for Cadzow($\alpha$).}
The same signal \eqref{eq:signal} was taken to investigate how the RMSE and the convergence rate depend on $\alpha$ for the Cadzow($\alpha$) algorithms with $L = 20$. The following STOP1 criterion was taken: $\frac{\|\calT^{-1}(\bfY_k) - \calT^{-1}(\bfY_{k + 1})\|^2}{N} < 10^{-8}$.

Figure~\ref{img_2axis} shows that smaller $\alpha$ leads to more accurate estimates of the signal, but increases their computational costs.
This general rule sometimes does not work for very small values of the parameter $\alpha$, see Figure~\ref{img_2axis-3}, where
the noise standard deviation was increased from 1 to 3. One can see that for $\alpha$ smaller than 0.1 the dependence of the estimation errors on $\alpha$ changes.
It seems that the threshold $\alpha$, which corresponds to the change of the accuracy behaviour, depends on the 1-iteration separability of the signal from noise.
Indeed, as we can expect, for small $\alpha$ the separability quality is poor (see an example in Appendix~\ref{sec:app}).

\begin{figure}[!hhh]
		\begin{center}
	\includegraphics[width = 0.7\columnwidth]{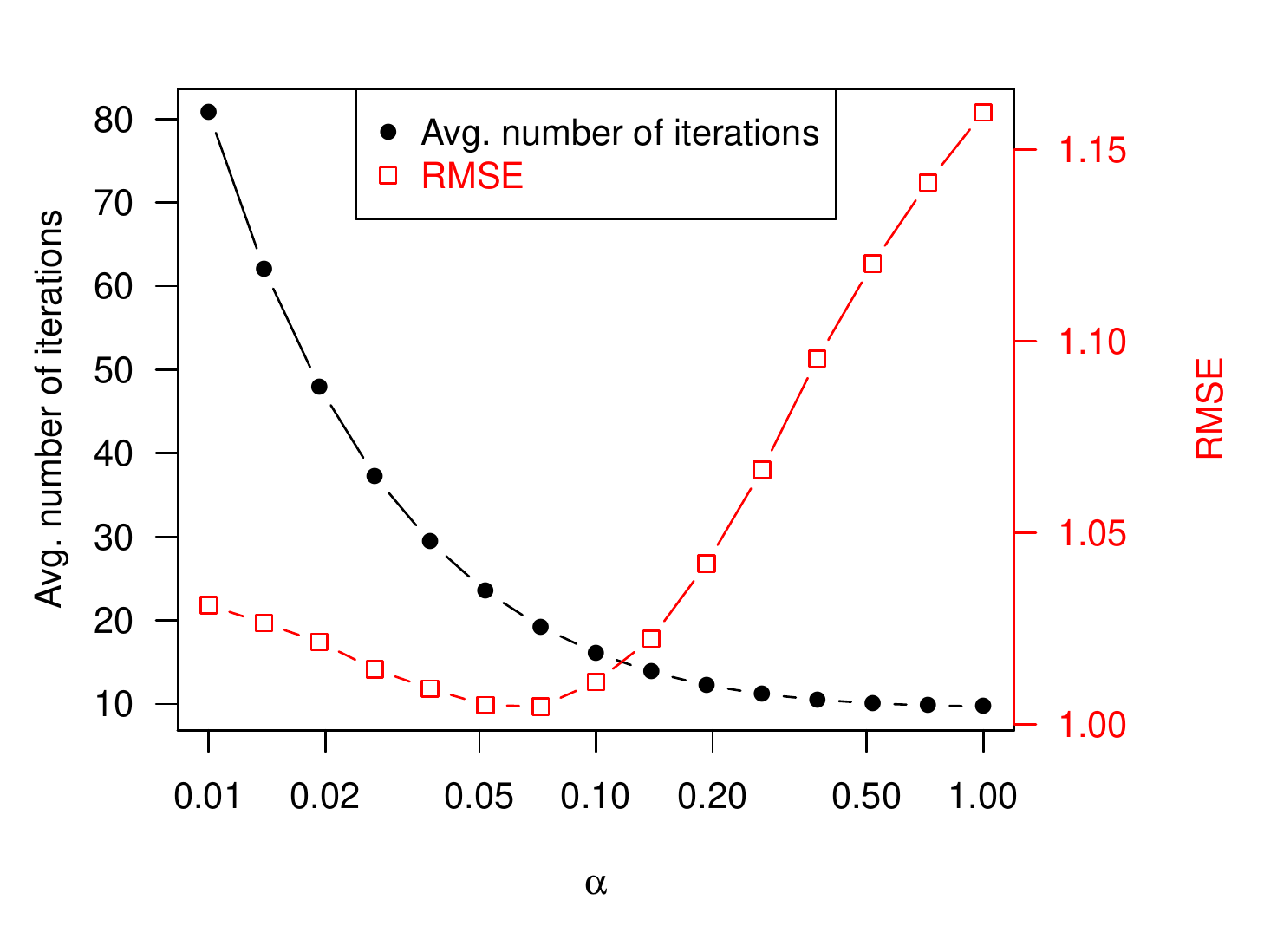}
	\caption{The RMSE and the average number of iterations depending on $\alpha$ (log-scale), $L=20$, $\sigma=3$.}
	\label{img_2axis-3}
		\end{center}
\end{figure}

\smallskip
\textit{Comparison by accuracy at the first iteration and in the limit.} Let us now involve the Extended and Weighted Cadzow iterations and examine the spreading of the estimation errors along the series. The maximal number of iterations equal to 100 is taken for the stop criterion STOP1 (this choice yields the error close to the limiting value); the stop criterion STOP2 for inner iterations is as follows:
$\frac{\|\bfY_k - \bfY_{k+1}\|^2}{LK} < 10^{-4}$. The initial left and right extended values $\tsL_{L-1}$ and $\tsR_{L-1}$ in the Extended Cadzow iterations are obtained using the vector SSA-forecasting method \cite[Section 2.3.1]{Golyandina.etal2001}.

Figures~\ref{fig:s1_it1}~and~\ref{fig:s1_it100} show the dependence of the RMSE on numbers of the series points. Figure~\ref{fig:s1_it1} shows the errors at the first iteration, Figure~\ref{fig:s1_it100} shows the errors at the 100-th iteration. It is clearly seen that the Extended Cadzow method is the most precise in both cases. The Cadzow($1$) and Cadzow-$\widehat\bfC$ methods are the best at the first iteration among the set of methods without inner iterations. The best method in the limit (after 100-th iteration, the errors do not change significantly further) is the Cadzow($0.1$) method; this is not surprising according to Figure~\ref{img_cadzowspeed2}.

\begin{figure}[!hhh]
	\begin{center}
		\includegraphics[width = 0.7\columnwidth]{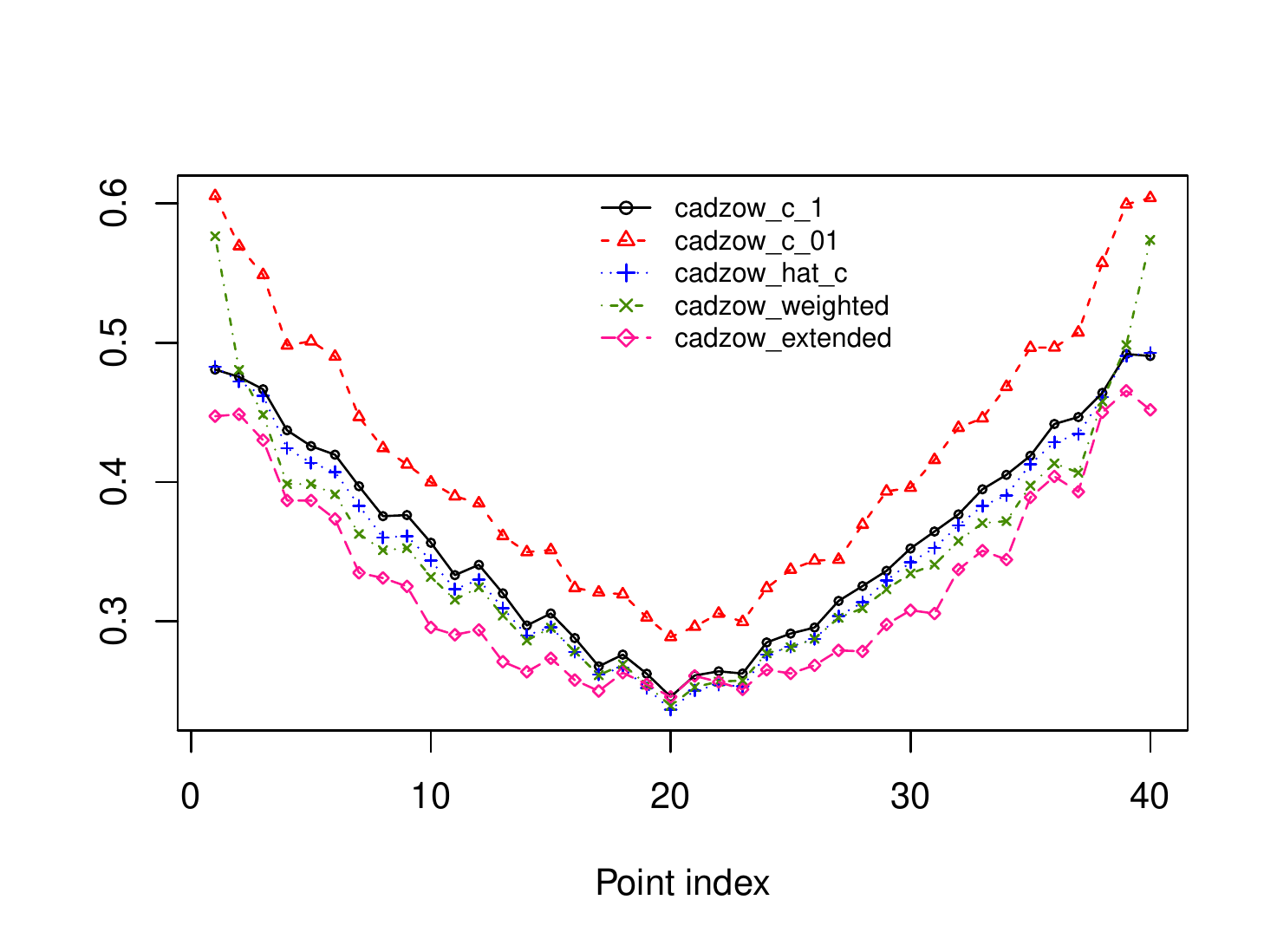}
		\caption{The RMSE of signal estimates at each series point; iteration 1; $L=20$, $\sigma=1$.}
		\label{fig:s1_it1}
	\end{center}
\end{figure}

\begin{figure}[!hhh]
	\begin{center}
		\includegraphics[width = 0.7\columnwidth]{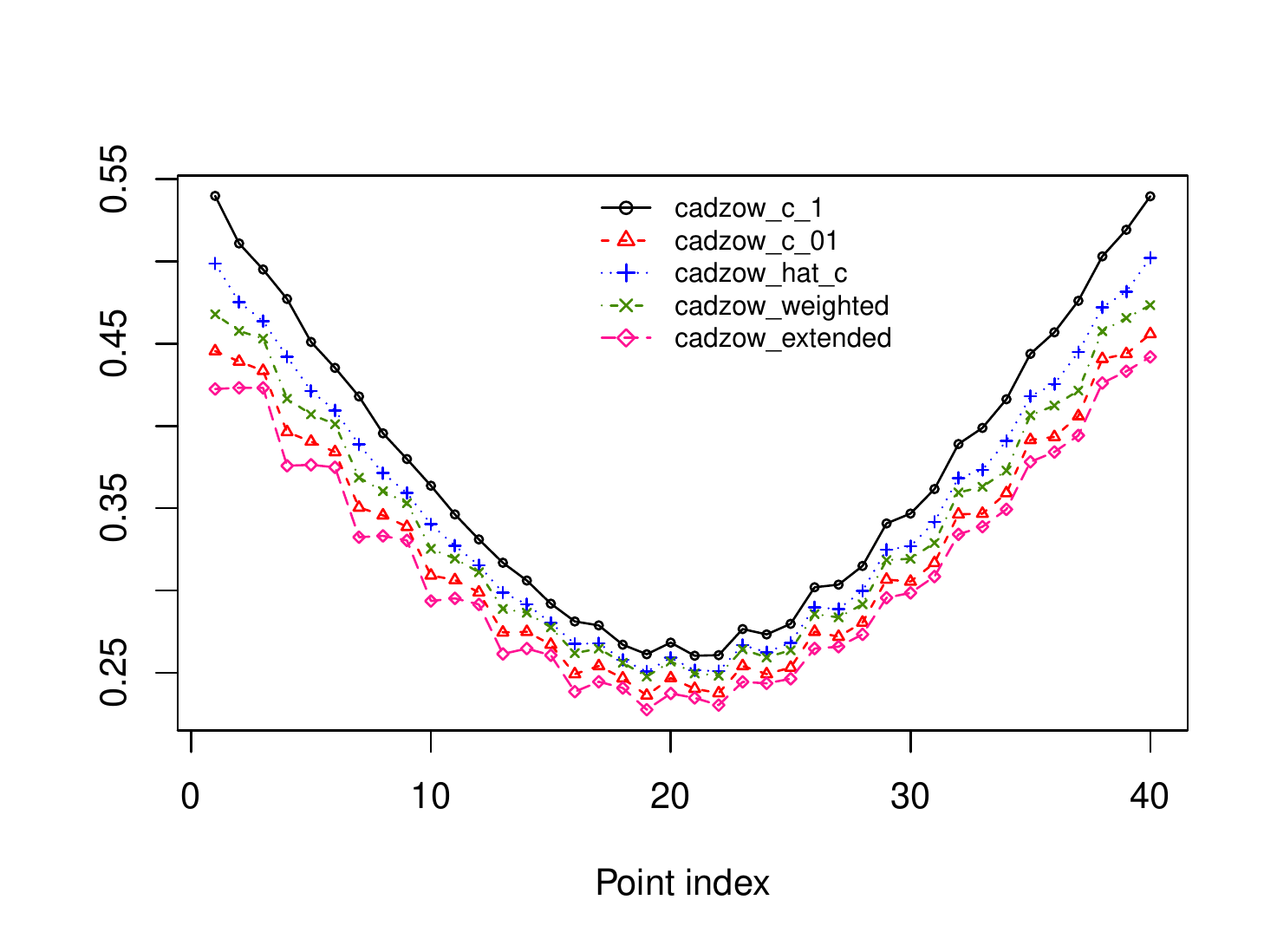}
		\caption{The RMSE of signal estimate at each series point; iteration 100; $L=20$, $\sigma=1$.}
		\label{fig:s1_it100}
	\end{center}
\end{figure}

\smallskip
\textit{Errors for signal and original series approximations.} Since we used the least-squares method for estimation of the signal $\tsS$, consider Table~\ref{fintable} which shows the RMSE for $\tilde \tsS$ as an estimate of $\tsS$ (i.e., the signal estimation errors) and the RMSE for $\tilde \tsS$ as an estimate of the original series $\tsX$ (i.e., the series approximation errors). Here $k$ is the number of iterations, $L=20$. Table~\ref{fintable} confirms the conclusions about comparison of the methods by accuracy of signal estimation. Also it is seen that the quality of original series approximation does not always correspond with the quality of signal estimation. For example, overfitting is clearly present for the Cadzow($0.1$) iterations at the first iteration. However, the methods are ordered identically by errors of series approximation and signal estimation in the limit. This means that minimization of the error of approximation likely yields minimization of the error of signal reconstruction.
The same ordering of the errors is very important for practice, since for real-life data we can choose a better method and its parameters by smaller approximation errors.
Certainly, a proper rank should be set before the comparison of the methods.

\begin{table*}
		\caption{Comparison of methods by the RMSE, $L = 20$, $\sigma=1$, for the signal \eqref{eq:signal}.}\label{fintable}

		\begin{tabular*}{\textwidth}{@{\extracolsep{\fill}}lcccc}
			\hline
			Method & $\tsS$, $k = 1$ & $\tsX$, $k = 1$ & $\tsS$, $k = 100$ & $\tsX$, $k = 100$  \\
			\hline
			Cadzow, $\alpha = 1$ & 0.3758 & 0.9195 & 0.3782 & 0.9664 \\
			\hline
			Cadzow, $\alpha = 0.1$ & 0.4329 & 0.7040 & 0.3311 & 0.9506 \\
			\hline
			Cadzow $\hat{\bfC}$ & 0.3655 & 0.8925 & 0.3559 & 0.9583 \\
			\hline
			Weighted Cadzow & 0.3644 & 0.8891 & 0.3455 & 0.9549 \\
			\hline
			Extended Cadzow & 0.3361 & 0.9030 & 0.3189 & 0.9471 \\
			\hline
		\end{tabular*}
\end{table*}

\begin{table*}
	\begin{center}
		\caption{Comparison of adjusted methods by the RMSE, $L = 20$, $\sigma=1$, for the signal \eqref{eq:signal}.}\label{fintable_improved}
		\begin{tabular*}{\textwidth}{@{\extracolsep{\fill}}lcccc}
			\hline
			Method & $\tsS$, $k = 1$ & $\tsX$, $k = 1$ & $\tsS$, $k = 100$ & $\tsX$, $k = 100$  \\
			\hline
			Cadzow, $\alpha = 1$ & 0.3714 & 0.9175 & 0.3667 & 0.9622 \\
			\hline
			Cadzow, $\alpha = 0.1$ & 0.4385 & 0.7023 & 0.3276 & 0.9493 \\
			\hline
			Cadzow $\hat{\bfC}$ & 0.3626 & 0.8909 & 0.3478 & 0.9555 \\
			\hline
			Weighted Cadzow & 0.3640 & 0.8883 & 0.3380 & 0.9523 \\
			\hline
			Extended Cadzow & 0.3370 & 0.9030 & 0.3184 & 0.9469 \\
			\hline
		\end{tabular*}
	\end{center}
\end{table*}

The same simulations were performed with the adjusted algorithms (see Remark~\ref{rem:adjust}). One can see in Table~\ref{fintable_improved} that the accuracy is almost the same. By its definition, the adjustment always improves the approximation of the original series; however, the influence on the accuracy of signal approximation is ambiguous (the adjustment improves the accuracy at the 100-th iteration; results are various at the first iteration).

Thus, the numerical examples mostly confirm the statements itemized in Section~\ref{sec:comments}.

\section{Real-life example}
\label{sec:ex_real}
Let us consider the series `Fortified wine' (fortified wine sales, Australia, monthly, from January 1980 till December 1993) \citep{HyndmanTSDL}. This series has the following structure: a signal consisting of an exponential trend and a seasonality of a complex form and of noise.
We compare the Cadzow($\alpha$) algorithms for different $\alpha$ and demonstrate that a smaller $\alpha$ provides
a smaller approximation error.

In Section~\ref{sec:simul} we considered a simple example with a signal consisting of one sine wave.
However, the real-life time series has much more complex form. To confirm the approach that we can minimize
the approximation errors to diminish the signal estimation error,
let us construct a model of the `Fortified wine' series and use this model for simulation to
check the approach.

The series `Fortified wine' has been analyzed in several papers (see e.g. \cite{Golyandina.etal2015} for a bit longer time series).
A typical analysis of the time series by Basic SSA
\cite[Chapter 1]{Golyandina.etal2001}
with window length $L=84$ shows that the leading 11 eigentriples
correspond to the signal. The ESPRIT method~\cite{Roy.Kailath1989,Golyandina.Zhigljavsky2012} applied to the found signal subspace provides
estimates of exponential bases $\rho_m$ for the trend and for modulations of seasonal components in the series components, where the $k$-th term in the $m$th component is given in the form $C_m \rho_m^k$ or $C_m \rho_m^k \sin(2\pi\omega_m k +\phi_m)$, $k=1,\ldots,N$. The ESPRIT method also estimates the frequencies $\omega_m$; however, for seasonal components the possible frequencies are known and therefore we changed the frequency estimates to  nearest values in the form $j/12$.
The coefficients $C_m$ before the found series components and the phases $\phi_m$ of seasonal components were estimated by
the least-squares method. Noise is taken multiplicative, that is, its variance increases proportionally to the trend.
Thus, the model of the signal $\tsS = (s_{1}, \ldots, s_N)$, $N=168$, is estimated as
\begin{multline*}
s_{k} = 3997.74\, (0.9967)^k + \\
1174.75\, (0.9942)^k \sin(\frac{2\pi k}{12} - 2.249) + \\
425.75\, (1.0001)^k \sin(\frac{2\pi k}{4} + 2.333) + \\
211.55\, (1.004)^k \sin(\frac{2\pi k}{6} + 1.677) + \\
169.33\, (1.0007)^k \sin(\frac{2 \pi k}{2.4} + 1.533) + \\
361.07\, (0.9884)^k \sin(\frac{2 \pi k}{3} - 2.901).
\end{multline*}
The model of the whole series $\tsX=(x_1,\ldots,x_N)$ is
$x_i=s_i + 353.17\, (0.9967)^k \ve_i$,
where  $\ve_i$, $i=1,\ldots,N$,  is Gaussian white noise with mean equal to $0$ and variance equal to $1$.
We set $L=84$ and apply the Cadzow($\alpha$) algorithm with $\alpha=1, 0.8, 0.6, 0.4, 0.2, 0.1, 0.05$.
The following STOP1 criterion is taken in Algorithm~\ref{alg:obliqueCadzow}:
$\frac{\|\calT^{-1}(\bfY_k) - \calT^{-1}(\bfY_{k + 1})\|^2}{N} < 10^{-4}$.
The algorithm was applied to 1000 independent realizations of the model and also to the original `Fortified wine' series.
Table~\ref{tab:rltable} contains the RMSE of model signal estimation (the column $\tsS$),
the RMSE of model series approximation (the column $\tsX$) and the approximation accuracy for
`Fortified wines' series (the column $\tsX^*$).

\begin{table*}
	\caption{Comparison of the errors of signal estimation and series approximation using the Cadzow methods, for the `Fortified wine' series and the model realizations.}
\label{tab:rltable}
	
	\begin{tabular*}{\textwidth}{@{\extracolsep{\fill}}lcccc}
		\hline
		Method: & $\tsS$ & $\tsX$ & $\tsX^*$ \\
		\hline
		Cadzow, $\alpha = 1$ & 127.71 & 263.20 & 283.58 \\
		\hline
		Cadzow, $\alpha = 0.8$ & 127.18 & 262.98 & 283.25 \\
		\hline
		Cadzow, $\alpha = 0.6$ & 126.42 & 262.63 & 282.72 \\
		\hline
		Cadzow, $\alpha = 0.4$ & 125.39 & 262.06 & 281.77 \\
		\hline
		Cadzow, $\alpha = 0.2$ & 124.10 & 260.94 & 279.55 \\
		\hline
		Cadzow, $\alpha = 0.1$ & 125.09 & 260.52 & 276.70 \\
		\hline
		Cadzow, $\alpha = 0.05$ & 129.44 & 261.47 & 274.00 \\
		\hline
	\end{tabular*}
\end{table*}

Table~\ref{tab:rltable} shows that for $\alpha\in [0.2,1]$ a smaller approximation error yields a smaller reconstruction error. However, for the smaller values $\alpha$
the tendency is broken. Probably, small values of $\alpha$ do not provide a sufficient separability from noise
to converge toward the global minimum.

Figure \ref{fig:rl} depicts the approximation of the original `Fortified wine' series $\tsX^*$ obtained by the Cadzow($0.2$) algorithm.
The dotted line corresponds to the original series, while the solid line shows the finite-rank estimate of the signal of rank 11.
One can expect that the Cadzow($0.2$) algorithm provides one of the most accurate finite-rank estimates of the signal.

\begin{figure}[!hhh]
	\includegraphics[width = \columnwidth]{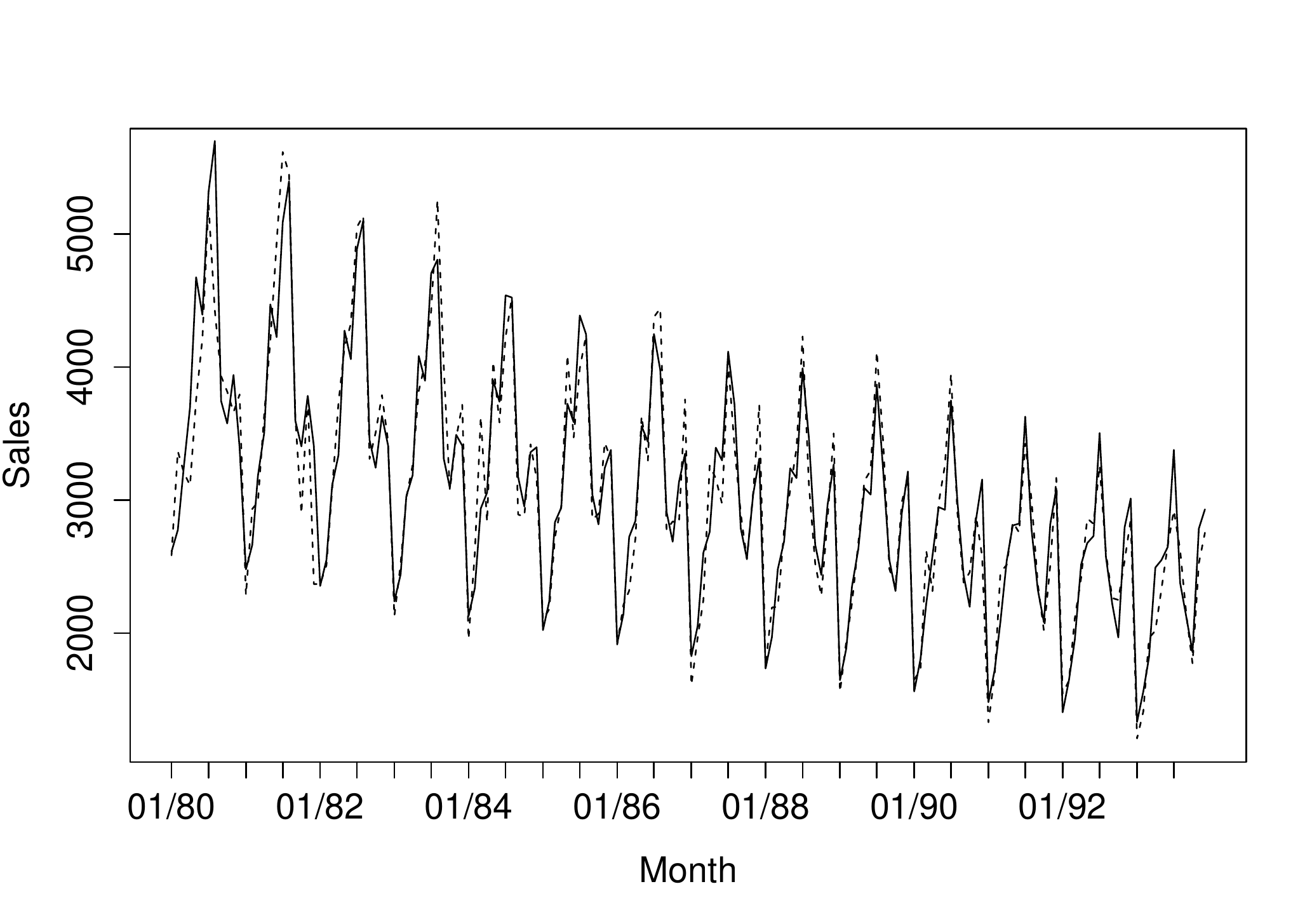}
	\caption{`Fortified wine' series: application of the Cadzow($0.2$) algorithm.}
	\label{fig:rl}
\end{figure}

\section{Conclusion}
\label{sec:concl}

Several known and new iterative algorithms for approximation of a noisy signal by a finite-rank series were considered in the present paper. The approximation was performed by a least-squares method and its result was considered as an estimate of the signal.

 We used equivalent statements of the problems for weighted matrix approximation and  weighted time-series approximation, where equal weights in the least-squares matrix problem correspond
 to unequal weights in the least-squares time series problem, and vice verse.

A wide range of the iterative algorithms was reviewed with the aim to obtain equal weights in the least-squares method applied to time series.
Equal weights were formally achieved in the algorithms using inner iterations, which converge to a local minimum only and also make the algorithms very time-consuming. It appears that the use of methods without inner iterations (Cadzow-type methods) leads to approximately equal weights only.

Convergence of outer iterations by subsequences was proved for the reviewed algorithms.

Comparison of the accuracy and convergence rate was performed by simulation on the example of a noisy sine-wave signal. The simulation results confirmed the theoretical results. It appears that time-series weights, which are closer to equal, provides in the limit more time-consuming and simultaneously more accurate methods.
Also, the simulations confirm that the convergence rate is in accordance with the separability rate. Therefore, for the Cadzow-type methods,
there is the correspondence between slow convergence, poor separability,
inaccurate approximation at the first one iteration and high accuracy in the limit; and vice verse.
In particular, for the Cadzow iterations, which produce Singular Spectrum Analysis for signal reconstruction at the first iteration, the window length equal to half of the series length gives poor accuracy in the limit and
one of the best reconstructions at the first iteration.

\appendix

\section{Separability of sine-wave signal from constant residual for the Cadzow($\alpha$) iterations}
\label{sec:app}

Let us consider modifications of SSA, which are produced by the first iteration of the Cadzow($\alpha$) iterative algorithms described in Section~\ref{sec:cadzow_alpha}. Recall that the Cadzow($1$) iterative algorithm produces the conventional Basic SSA method \cite{Golyandina.etal2001,Golyandina.Zhigljavsky2012}, while the first iteration of a general Cadzow($\alpha$) algorithm can be considered as a
particular case of Oblique SSA \cite{Golyandina2013} with the Euclidean inner product in the column space and a special inner product in the row space.

Separability of signals from residuals in SSA is deeply investigated in \cite{Golyandina.etal2001,Golyandina2010}. Separability of a signal means the ability of the method to extract the signal. In fact, separability is related to the accuracy of signal estimation obtained at the first iteration
of the considered iterative algorithms. Notions of exact, approximate and asymptotic (as the series length tends to infinity) separability together with examples of the asymptotic separability rates are introduced in \cite{Golyandina.etal2001} and can be generalized for the oblique case. Following by \cite{Golyandina.etal2001}, we will measure the separability by means of the cosines between $L$- and $K$-lagged vectors
of the signal and the residual.

Let $\bfC \in \sfR^{K \times K}$ be a symmetric positive semidefinite matrix, $\tsX_1$ and $\tsX_2$ be two different time series of length $N$, $\bfX^1$, $\bfX^2$ be their trajectory matrices. Define the so-called \emph{correlation coefficient between the $i$-th and $j$-th columns} as:
\begin{equation}\label{col_corr}
\rho^c_{i,j} = \frac{(X^1_i, X^2_j)}{\|X^1_i\| \|X^2_j\|},
\end{equation}
where $X^k_i$ is the $i$-th column of the matrix $\bfX^k$, $k = 1, 2$, $(\cdot, \cdot)$ is the Euclidean inner product, $\|\cdot\|$ is the Euclidean norm. Define the \emph{correlation coefficient between the $i$-th and $j$-th rows} as:
\begin{equation}\label{row_corr}
\rho^r_{i,j} = \frac{(X^{1,i}, X^{2,j})_\bfC}{\|X^{1,i}\|_\bfC \|X^{2,j}\|_\bfC},
\end{equation}
where $X^{k,i}$ is the $i$-th row of matrix $\bfX^k$, $k = 1, 2$, and $(\cdot, \cdot)_\bfC$ is the oblique inner product in $\sfR^K$ generated by a matrix $\bfC$ as follows: $(X, Y)_\bfC = X \bfC Y^\sfT$ (here $X$ and $Y$ are row vectors), $\|\cdot \|_\bfC$ is the norm with respect to this inner product. We say that the series $\tsX_1$ and $\tsX_2$ are \emph{weakly $\varepsilon$-separable} if
\begin{equation}\label{weak_sep_eq}
\rho = \max\Big(\max_{1 \le i,j \le K}|\rho^c_{i,j}|, \max_{1 \le i,j \le L}|\rho^r_{i,j}|\Big) < \varepsilon.
\end{equation}
We are interested in the order of $\varepsilon$ as $N\ra \infty$ for different matrices $\bfC$, where the series $\tsX_k$, $k=1,2$, consist of the first $N$ terms of infinite series $\tsX_k^\infty$.

Here we apply the theory to an example with a sine-wave signal and a constant residual. By analogy with SSA, we can expect that the asymptotic separability rate will be the same if the residual is Gaussian white noise.
Thus, let
$\tsX_1^\infty = (\cos(2 \pi \omega k), k = 1, 2, \ldots)$ and  $\tsX_2^\infty = (c, c, \ldots)$.
Consider $N \to \infty$ and $L(N),\,K(N) \to \infty$ such that $N = L + K - 1$.
When $\bfC$ is the identity matrix, the answer is known: $\varepsilon$ has order $1/\min(L,K)$, i.e. the rate of separability has order $1/N$ for $L$ proportional to $N$.
This result can be found in \cite[Section 6.1]{Golyandina.etal2001}.

\smallskip
Let us consider the separability rate for the Cadzow($\alpha$) iterations introduced in Section~\ref{sec:cadzow_alpha}.

\begin{remark}
	In what follows we will use the following denotation:
a function $f \in O(g(n))$ as $n\ra \infty$ if there exist $k>0$ and $n_0>0$ such that for any $n > n_0$ the inequality $|f(n)| \le k |g(n)|$ holds;
a function $f \in \Omega(g(n))$ as $n\ra \infty$ if there exist $k>0$ and $n_0>0$ such that for any $n > n_0$ the inequality $|f(n)| \ge k |g(n)|$ holds.
\end{remark}

\begin{proposition}
	\label{prop:separ1}
	Let $\tsX_1^\infty = (\cos(2 \pi \omega k), k = 1, 2, \ldots)$, where $0<\omega <0.5$, be a sine wave, $\tsX_2^\infty = (c, c, \ldots)$ be a constant series,  $L(N),K(N)\ra \infty$, where $N=L+K-1$, $h = h_N = \lfloor N/L \rfloor$. Let also $0<\alpha=\alpha(N)\le 1$, and $\bfC=\bfC(\alpha)$ be defined in \eqref{zhigweights}, i.e.  $\bfC$ is a diagonal matrix with diagonal elements:
	\begin{equation*}
	c_k = \begin{cases}
	1, & \text{if} \quad k = jL+1 \quad \text{for some} \ j = 0, \ldots, h-1,\\
	\alpha, & \text{otherwise},
	\end{cases}
	\end{equation*}
Then
	\begin{enumerate}
		\item $\rho$ given by \eqref{weak_sep_eq} has the following order: $\rho~=~O\left(\max\left(\frac{1}{L}, \frac{(1-\alpha)C_{L,K}+\alpha}{(1-\alpha)D_{L,K}+\alpha K}\right)\right)$, where
		\begin{equation*}
		C_{L,K} = C_{L(N),K(N)} = \max_{\substack{1 \le j \le L}}  \sum_{\substack{1 \le k \le K: \\ c_k = 1}}\cos(2 \pi \omega (j + k - 1)),		\end{equation*} and
		\begin{equation*}
		D_{L,K} = D_{L(N),K(N)} = \min_{\substack{1 \le j \le L}} \sum_{\substack{1 \le k \le K: \\ c_k = 1}}\cos^2(2 \pi \omega (j + k - 1)).
		\end{equation*}
		\item If $h_N$ is bounded by a constant, then $\rho=O\left(\max\left(\frac{1}{L}, \frac{1}{\alpha K}\right)\right)$.
		\item If there exists small $\delta$, $0 < \delta < 1/2$, such that $2\,L(N)\,\omega\in\sfR \setminus \left(\bigcup_{k \in \sfZ} [k - \delta, k + \delta] \right)$ for every $N$, where $\sfZ$ is the set of integers, then $\rho~=~O\left(\max\left(\frac{1}{L}, \frac{1}{(1-\alpha)N/L+\alpha K}\right)\right)$.
	\end{enumerate}
	
\end{proposition}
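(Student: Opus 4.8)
The plan is to estimate separately the column correlation $\rho^c_{i,j}$ and the row correlation $\rho^r_{i,j}$, since the weight matrix $\bfC$ affects only the row inner product. For the columns, the situation is exactly the Euclidean one: $X^1_i = (\cos(2\pi\omega(i+\ell-1)))_{\ell=1}^L$ and $X^2_j = (c)_{\ell=1}^L$, so $(X^1_i, X^2_j) = c\sum_{\ell=1}^L \cos(2\pi\omega(i+\ell-1))$, a geometric-type sum bounded by a constant depending on $\omega$ (via $1/|\sin(\pi\omega)|$), while $\|X^1_i\|\asymp\sqrt L$ and $\|X^2_j\| = |c|\sqrt L$. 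Hence $\max_{i,j}|\rho^c_{i,j}| = O(1/L)$, reproducing the classical SSA bound and contributing the $1/L$ term in all three parts. This part is routine and I would dispatch it quickly by citing the corresponding computation in \cite[Section 6.1]{Golyandina.etal2001}.

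The substance is the row correlation. Here $X^{1,i}$ is the $i$-th row of the sine-wave trajectory matrix, i.e. $(\cos(2\pi\omega(i+k-1)))_{k=1}^K$, and $X^{2,j} = (c)_{k=1}^K$, and the inner product is weighted by the $c_k$. Splitting $c_k = \alpha + (1-\alpha)\mathbf 1_{\{c_k=1\}}$, I would write
\begin{equation*}
(X^{1,i}, X^{2,j})_\bfC = c\Big(\alpha\sum_{k=1}^K \cos(2\pi\omega(i+k-1)) + (1-\alpha)\!\!\sum_{k:\,c_k=1}\!\!\cos(2\pi\omega(i+k-1))\Big),
\end{equation*}
so the numerator is $O\big(\alpha + (1-\alpha)C_{L,K}\big)$ once the first (full) cosine sum is bounded by a constant. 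For the denominator I need a lower bound on $\|X^{1,i}\|_\bfC\,\|X^{2,j}\|_\bfC$. Clearly $\|X^{2,j}\|_\bfC^2 = c^2\sum_k c_k \asymp \alpha K$ (plus the $h$ unit weights, negligible when $h$ is small), and
\begin{equation*}
\|X^{1,i}\|_\bfC^2 = \alpha\sum_{k=1}^K\cos^2(2\pi\omega(i+k-1)) + (1-\alpha)\!\!\sum_{k:\,c_k=1}\!\!\cos^2(2\pi\omega(i+k-1)) \ge \tfrac{\alpha}{2}(K - O(1)) + (1-\alpha)D_{L,K},
\end{equation*}
using $\sum\cos^2 = K/2 + O(1)$. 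Combining these gives the denominator $\asymp \sqrt{(1-\alpha)D_{L,K}+\alpha K}\cdot\sqrt{\alpha K}$; dividing numerator by denominator and taking the max with the column bound yields part (1). Part (2): when $h_N$ is bounded, the set $\{k:c_k=1\}$ has at most $h$ elements, so $C_{L,K} = O(1)$ and $D_{L,K} = O(1)$, and the general estimate collapses to $O(\max(1/L, 1/(\alpha K)))$. Part (3): the separation condition $2L\omega \notin \bigcup_k[k-\delta,k+\delta]$ makes the unit-weight positions $k=jL+1$, $j=0,\dots,h-1$, sample the sine wave at phase steps $2\pi\omega L$ bounded away from multiples of $2\pi$; then $C_{L,K}$ is again a bounded geometric sum (ratio $e^{2\pi i\omega L}$, whose distance from $1$ is $\gtrsim\delta$), so $C_{L,K} = O(1)$, while $D_{L,K}$, being a sum of $h \approx N/L$ squared cosines that are \emph{not} all near zero under the same non-resonance condition, is $\Omega(N/L)$; substituting $C_{L,K}=O(1)$ and $D_{L,K}\asymp N/L$ into part (1) gives the claimed $O(\max(1/L, 1/((1-\alpha)N/L+\alpha K)))$.

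The main obstacle is the lower bound $D_{L,K} = \Omega(N/L)$ in part (3): one must ensure the $h$ sampled values $\cos^2(2\pi\omega(j+kL))$ (as $k$ ranges) do not conspire to be uniformly small for the worst-case $j$. This is where the Diophantine-type hypothesis $2L\omega\in\sfR\setminus\bigcup_k[k-\delta,k+\delta]$ is essential: it guarantees that consecutive sampled phases differ by a fixed amount bounded away from $0$ (mod $1$), so among any few consecutive terms at least a constant fraction have $\cos^2$ bounded below, and summing over the $\approx N/L$ blocks gives the linear-in-$h$ lower bound uniformly in $j$. The bounds on the plain cosine sums $\sum_{k=1}^K\cos(2\pi\omega(i+k-1))$ and $\sum_{k=1}^K\cos^2(\cdots)$ are standard (closed-form via Dirichlet kernel / double-angle), and the only care needed elsewhere is bookkeeping the lower-order $h$ unit-weight contributions, which are dominated in every regime considered.
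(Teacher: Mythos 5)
Your overall strategy is the same as the paper's: handle the column correlations by the classical unweighted computation (numerator $O(1)$, denominator $\Omega(L)$, hence $O(1/L)$), and for the row correlations split each weighted sum into the uniform $\alpha$-part plus the $(1-\alpha)$-part supported on the $h$ unit-weight positions, bounding the numerator by $(1-\alpha)C_{L,K}+\alpha$ and the denominator factors from below via the closed-form cosine sums. Parts (2) and (3) are then obtained exactly as in the paper, by showing $C_{L,K}=O(1)$ and (for part 3) $D_{L,K}=\Omega(N/L)$ from the non-resonance hypothesis; your equidistribution phrasing of the latter is equivalent to the paper's use of the bounded $\csc(2\pi L\omega)$ factor in the closed form for $\sum\cos^2$.

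There is, however, one concrete gap in the denominator bookkeeping for part (1). You replace $\|X^{2,j}\|_\bfC^2 = c^2\sum_k c_k = c^2\bigl((1-\alpha)h+\alpha K\bigr)$ by $c^2\alpha K$, calling the $h$ unit weights negligible, and arrive at a denominator of order $\sqrt{(1-\alpha)D_{L,K}+\alpha K}\cdot\sqrt{\alpha K}$. Since $\alpha K\le (1-\alpha)D_{L,K}+\alpha K$, this product is at most $(1-\alpha)D_{L,K}+\alpha K$, so dividing the numerator by it yields a bound that is strictly weaker than the claimed $\frac{(1-\alpha)C_{L,K}+\alpha}{(1-\alpha)D_{L,K}+\alpha K}$ exactly when $(1-\alpha)D_{L,K}\gg\alpha K$, i.e., for small $\alpha$ --- the regime the proposition is designed to cover, and the one in which the part (3) bound $\frac{1}{(1-\alpha)N/L+\alpha K}$ is dominated by the $(1-\alpha)N/L$ term your estimate fails to reproduce. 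The repair is one line: keep the full $\sum_k c_k=(1-\alpha)h+\alpha K$ and note that $D_{L,K}\le h$ (it is a minimum of sums of $h$ terms each lying in $[0,1]$), so $\sum_k c_k\ge(1-\alpha)D_{L,K}+\alpha K$ and the product of the two denominator factors is $\Omega\bigl(((1-\alpha)D_{L,K}+\alpha K)^2\bigr)$; this is what the paper's proof does implicitly by writing $\sum_k c_k=(1-\alpha)h+\alpha K$ alongside $\sum_k c_k\cos^2(\cdot)=(1-\alpha)\Omega(D_{L,K})+\alpha\Omega(K)$. With that correction your argument coincides with the paper's.
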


\begin{proof}
1. To prove the theorem, we should evaluate the order of the expressions:
	\begin{equation}\label{th5_sep1}
	\rho^c_{i,j} = \frac{\sum_{k=j}^{j + L - 1} \cos(2 \pi \omega k)}{\sqrt{L \left(\sum_{k=j}^{j + L - 1} \cos^2(2 \pi \omega k)\right)}},
	\end{equation}
	\begin{equation}\label{th5_sep2}
	\rho^r_{i,j} = \frac{\sum_{k=1}^K c_k\cos(2 \pi \omega (j + k - 1))}{\sqrt{\left(\sum_{k=1}^K c_k\right) \left(\sum_{k=1}^K c_k\cos^2(2 \pi \omega (j + k - 1))\right)}}.
	\end{equation}
	The following trigonometric equalities hold:
	\begin{equation}
	\label{sumcos}
	\sum_{k=1}^n \cos(ak + b) = \csc(a/2) \sin(an / 2) \cos \left(\frac{an + a + 2b}{2} \right),
	\end{equation}
	\begin{multline}
	\label{sumsqcos}
	\sum_{k=1}^n \cos^2(ak + b) = \frac{1}{4}(2n + \csc(a) \sin(2an + a + 2b) -\\ - \csc(a)\sin(a + 2b)),
	\end{multline}
	for any real $a, b$ and positive integer $n$.
	Therefore, since the series $\tsX_1$ is not constant, the numerator in \eqref{th5_sep1} has order $O(1)$, while the denominator has order $\Omega(L)$. Thus, we obtain the order $1/L$.
	
	To evaluate the order of \eqref{th5_sep2}, consider the sum over $k$ such that $c_k=1$ separately:
	\begin{multline*}
	\sum_{k=1}^K c_k\cos(2 \pi \omega (j + k - 1)) = \\ (1-\alpha) \sum_{\substack{1 \le k \le K: \\ c_k = 1}}\cos(2 \pi \omega (j + k - 1)) +\\ +\sum_{1 \le k \le K}\alpha \cos(2 \pi \omega (j + k - 1)) = (1-\alpha)O(C_{L,K}) + \alpha\, O(1),
	\end{multline*}
    \begin{equation*}
	\sum_{k=1}^K c_k = (1-\alpha) h + \alpha K,
	\end{equation*}
and
    \begin{multline*}
	\sum_{k=1}^K c_k\cos^2(2 \pi \omega (j + k - 1)) = \\ (1-\alpha)\sum_{\substack{1 \le k \le K: \\ c_k = 1}}\cos^2(2 \pi \omega (j + k - 1)) +\\ +\sum_{1 \le k \le K }\alpha \cos^2(2 \pi \omega (j + k - 1)) = (1-\alpha) \Omega(D_{L,K}) + \alpha\, \Omega(K).
    \end{multline*}

    2. $C_{L,K}$ is exactly the maximum of sums, each of $h$ cosine values, therefore, the absolute value of $C_{L,K}$ is not larger than $h$. Therefore, if $h$ is bounded by a constant, then $|C_{L,K}|$ is bounded by the same constant, so, $C_{L,K} = O(1)$.

    3. The condition $2\, L(N)\, \omega \in  \sfR \setminus \left(\bigcup_{k \in \sfZ} [k - \delta, k + \delta] \right)$ guarantees that  $|\csc(\pi L(N) \omega)|$ in \eqref{sumcos} for $C_{L,K}$ and $|\csc(2 \pi L(N) \omega)|$ in \eqref{sumsqcos} for $D_{L,K}$ are bounded by a constant; therefore, we obtain an upper bound for $C_{L,K}$ and a lower bound for $D_{L,K}$. Thus, $C_{L, K}$ has order $O(1)$, while $D_{L, K}$ has order $\Omega(N/L)$.
\end{proof}

\begin{remark}
	Let us suppose that we have chosen $L(N)$ such that $\rho$ has order $\max\left(\frac{1}{L}, \frac{1}{(1-\alpha)N/L+\alpha K}\right)$. Then the optimal choice for $L$ is $L \approx \frac{\alpha(N + 1) + \sqrt{\alpha^2(N+1)^2 + 4N(1  - \alpha^2)}}{2(1 + \alpha)}$.
Hence, the rate of separability has the same order $O(1/N)$  for $\alpha(N) \to c$, where $0<c\le 1$ is some constant (however, a smaller $c$ corresponds to a smaller multiplier before $1/N$). In the case of converging to zero $\alpha(N) = O(N^{-\beta})$, the rate of separability becomes equal to $O(N^{\beta - 1})$ for $0 \le \beta \le 0.5$ and to $O(1/\sqrt{N})$ for $\beta > 0.5$.
\end{remark}

\bibliographystyle{plain}


\end{document}